\documentclass[journal]{IEEEtran}
\usepackage{amsthm}

\usepackage{cite}
\usepackage{amsmath,amssymb,amsfonts}
\usepackage{algorithm}
\usepackage{algpseudocode}
\usepackage{comment}
\newcommand{\ket}[1]{\left| #1 \right\rangle}
\usepackage{graphicx}
\usepackage{textcomp}
\usepackage{braket}
\usepackage{xcolor}
\usepackage[T1]{fontenc}
\usepackage{lmodern}

\usepackage{listings}
\usepackage{xcolor}
\usepackage{caption} 
\usepackage{subcaption}
\usepackage{tabularx}
\usepackage{array}
\usepackage{multirow}
\usepackage{amsthm}

\newtheorem{theorem}{Theorem}
\newtheorem{lemma}{Lemma}

\newcolumntype{C}[1]{>
{\centering\arraybackslash}m{#1}}

\def\BibTeX{{\rm B\kern-.05em{\sc i\kern-.025em b}\kern-.08em
    T\kern-.1667em\lower.7ex\hbox{E}\kern-.125emX}}
\begin{document}

\title{On Reachability Problem in Bounded Quantum Petri Nets}

\author{ {\bf Syed Asad Shah } \\ Department of Computer Engineering \\ Bilkent University, Ankara, Turkey \\
\and {\bf   A. Yavuz Oru\c{c}} \\ Department of Electrical and Computer Engineering \\ University of Maryland, College Park, MD, USA}

\maketitle

\begin{abstract}
In this paper, we propose a novel quantum solution to address the problem of reachability in bounded quantum Petri nets (QPNs)---an advanced modeling framework that combines the classical Petri net model with quantum mechanical principles. The proposed approach exploits quantum parallelism to construct a superposition over all reachable markings from the initial marking. Grover’s amplitude amplification algorithm is then applied to efficiently identify a desired target marking, while ancillary q-tokens used solely for transition control are excluded from the search space, significantly reducing its size. Theoretical analysis shows that our approach achieves a quadratic speed-up over classical exhaustive algorithms. Experimental results also  confirm the correctness and feasibility of the proposed quantum algorithm for solving the bounded reachability problem in Quantum Petri Nets.
\end{abstract}

\begin{IEEEkeywords}
Quantum computing, Petri nets, reachability, bounded systems, quantum parallelism
\end{IEEEkeywords}

\section{Introduction}

\noindent Over the past forty years, quantum computing has developed into an advanced and rapidly evolving discipline that integrates concepts from computer science, physics, and mathematics \cite{Lanzagorta2022, Djordjevic2021, Hassija2020, Horowitz-2019, Wolfgang2019}. Quantum computing is appealing because quantum mechanics can provide an almost unlimited level of parallelism at small physical scales, a capability that classical computers cannot achieve without replicating  hardware resources. Not all complex computational tasks may benefit from quantum parallelism but it has been established that for  a sizable subset of problems such as factoring large integers, carrying out constrained and unconstrained search, and completing optimization, quantum computing provides an efficient method of solution to circumvent the exponential resource growth required by classical methods \cite{Shor-1994, Grover-1996, Rhonda-2023}.

Several areas within quantum computing are currently the focus of active research \cite{Preskill-2023, Ramezani-2020, Aumasson-2017, Humble-2021, Bacon-2010}. The main goal of our study in this paper is to address the problem of reachability within the context of quantum Petri nets. A Quantum Petri Net (QPN) model is built on the foundation of classical Petri net model, which is a well-known framework to model concurrent, distributed, and event-driven systems  \cite{Papavarnavas, Proth}. Classical Petri nets have been widely used in domains such as communication networks, workflow systems, and manufacturing processes, because they effectively model synchronization, concurrency, and resource sharing problems \cite{Murata-1989, Zurawski-1994, Letia-2022}. However, such Petri nets are inherently limited to modeling deterministic or stochastic classical processes and thus cannot naturally capture the distinctive behaviors of quantum systems. By integrating principles of quantum mechanics into the classical Petri net token model, QPN permits modeling quantum states, such as superposition and entanglement, together with their stochastic evolution in time. This extended framework enables the representation of complex quantum behaviors that cannot be effectively captured by classical Petri nets. 

To this end, we consider a central analytical problem in classical Petri nets, namely the  reachability in such nets, i.e., to determine whether a desired marking (state) can be obtained from an initial marking through a finite sequence of transitions \cite{Ye-2003, Ru-2009, Kostin-2003}. The extreme complexity of the unbounded token case is well-known and we refer the reader to  \cite{Lipton-1976,Mayr-1981,Czerwinski-2020,Wang} for the study and mitigation of the state-space explosion problem in unbounded Petri nets.   In this paper, we focus on conservative Petri nets in which the number of tokens is fixed. We introduce a quantum-inspired computational approach that leverages quantum parallelism to explore reachable markings efficiently in such Petri nets using the QPN  model.  For a Petri net with $n$ tokens and $m$ places, a simple computation shows that the number of Petri nets is bounded by $\binom{n+m-1}{m-1}$ and for a Petri net with $k$ transitions, it may take \(O(\binom{n+m-1}{m-1} k m )\) steps to determine if a target marking is reachable from a source marking even with the use of DFS and BFS  algorithms. Our theoretical analysis establishes  that quantum parallelism provides a quadratic speed-up over such algorithms, demonstrating the utility and benefit of the quantum Petri net model. 

The remainder of the paper is organized as follows. The next section describes the QPN model. Section III formalizes the quantum reachability problem and provides our quantum reachability search algorithm. Section IV provides a case study to demonstrate the use of our algorithm and the paper is concluded in Section V.

\vspace{-10pt}
\section{Quantum Petri Net Model}

\noindent The Quantum Petri Net (QPN) model extends the classical Petri net model by integrating quantum mechanical principles including quantum state representation, quantum dynamics, and intrinsic probabilistic behavior into the classical Petri net model. The QPN model used here was derived from our earlier work\cite{Shah-2025}.

\noindent
\textbf{Definition 1}: A Quantum Petri net $(QPN)$ is a \text{6-tuple} $(D, P, T, E,  \mu_t,  v_t)$ where: 

\begin{enumerate}
\item $D$ is a finite set of quantum tokens $\{d_1, d_2, ..., d_n\}, n\!\!\geq\!\! 1,$  henceforth to be referred to as $q$-tokens, 
\item $P$ is a finite set of places $\{P_1, P_2, P_3, \ldots, P_m\}$, $ m \geq 1, $
\item $T$ is a finite set of transitions $\{T_1, T_2, T_3, \ldots, T_k\}$, $ k \geq 1, $ 
\item $E$ is a finite set of directed and labeled arcs, $\{x_1,x_2,\ldots,x_z\}, z\ge 1$ that connect places in $P$ with transitions in $T$  together with  the number of $q$-tokens needed for a transition to fire or to consume,
\item $\mu_t,$   called a {\em marking}, is a mapping from $P$ to the power set of $D$, where $\mu_t(P_i)$ denotes the set of  $q$-tokens residing in place $P_i$ at time $t=0,1,2,\ldots$. 
%and address field is the order in which transitions are fired to reach that specific marking i,
\item \textit{$v_t(P_i)$} is an assignment of qubits to $q$-tokens in place $P_i$  in marking $\mu_t(P_i)$ at time $t = 0, 1, 2, ... .$
\end{enumerate}

\noindent Quantum tokens or $q$-tokens represent the qubits or multi-qubits in the QPN system. It's assumed that tokens are conserved as the QPN evolves over time. These $q$-tokens are visualized as small black filled circles stored within the larger empty circles called places. Places can store a certain number of $q$-tokens using a primitive building block called a quantum S-R (Q-S-R) flip-flop presented in our earlier work \cite{Shah-2025}. The Q-S-R flip-flop stores a qubit rather than a bit as in a classical S-R flip-flop. Transitions are visualized as a large rectangle, perform quantum operations to modify the state of $q$-tokens and position of $q$-tokens using quantum gates. Directed arcs that connect places to transitions are referred to as incoming arcs, and those that connect transitions to places are referred to as outgoing arcs. These arcs facilitate the flow of $q$-tokens between places, and the labels on them are placeholders for $q$-tokens that are fed into a transition or a place with indicated quantities. The distribution of $q$-tokens across the places and the order in which transitions are fired represent the various states of the QPN system, referred to as marking. The function $v(P_i,\mu(t))$ represents the assignment of qubits to $q$-tokens under the marking $\mu$ at time $t$ in place $P_i$. With each transition execution, the QPN system evolves; this evolution is represented by a  {\em firing sequence}, i.e., a set of valid transitions of the QPN.

Figure.~\ref{fig:spn} depicts the simple QPN model with $n = 4$ $q$-tokens  $d_1, d_2, d_3, d_4$,  $m = 4$ places  $P_1, P_2, P_3, P_4$, $k = 2$ transitions  $T_1, T_2$, and $5$ directed arcs $x_1, x_2, x_3, x_4, x_5,$ with their capacities indicated next to each. An example of the initial marking of the QPN, where the four tokens are initialized to $d_1 = |00\rangle, d_2 = |10\rangle, d_3 = |00\rangle, d_4 = |01\rangle,$ at time $t = 0$ is shown below:

\begin{figure}[]
\centering
\includegraphics[height=6 cm, width=7 cm]{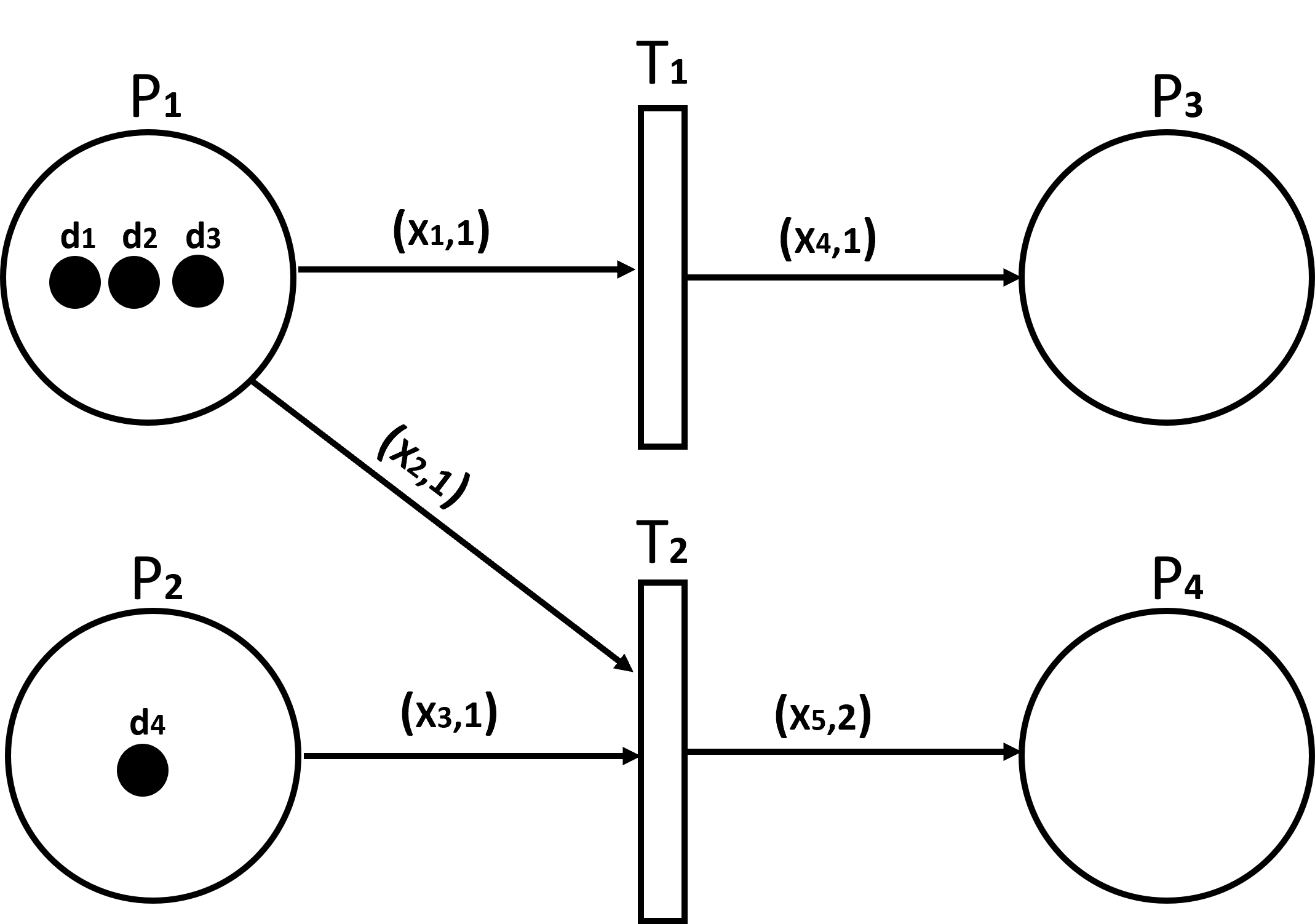}
\caption{A simple QPN example with $m = 4$ places, $k = 2$ transitions and $n = 4$ $q$-tokens.}
\vspace{-30pt}
\label{fig:spn}
\end{figure} 

\begin{align*}
\mu_0(P_1) &= \{d_1,d_2,d_3\}, &
\mu_0(P_2) &= \{d_4\}, \\
\mu_0(P_3) &= \emptyset, &
\mu_0(P_4) &= \emptyset.
\end{align*}

\vspace{-10pt}
\begin{align*}
v_0(P_1) &= 
\left\{
\begin{array}{l}
d_1: |00\!>  \\ 
d_2: |10\!> \\ 
d_3: |00\!> \\
\end{array} 
\right\}, \\
v_0(P_2) &= 
\,\,\left\{
\begin{array}{l}
d_4: |01\!\!> 
\end{array}
\right\}.
\end{align*}

\vspace{-10pt}
\section{The Quantum Reachability Problem}
\noindent This study aims to address the reachability problem in the context of quantum Petri nets, which is defined as follows. \\
\textbf{Definition 2}:
 \noindent
Given a Quantum Petri Net
\[
\mathcal{QPN} = (D, P, T, E, \mu_t, v_t),
\]
the reachability problem asks whether a desired marking $\mu_d$ is reachable from the
initial marking $\mu_0$. Formally, $\mu_d$ is reachable from $\mu_0$ if there exists a finite
firing sequence $\sigma$ such that 
\[
\mu_0 \xrightarrow{\;\sigma\;} \mu_d .
\]

Before we proceed with the description of the proposed algorithm, we first introduce the following definition.

\noindent\textbf{Definition 3:}
Let $L=\binom{n+m-1}{m-1}.$ It is easily established that  \(L\) is the maximum number of possible markings  with \(n\) tokens and  \(m\) places and sets an upper bound on the number of iterations in Phase 1 of Algorithm 1.

\vspace{5pt}\noindent
{\bf Algorithm 1. Quantum Reachability}

\noindent{\bf Input:} {\rm A bounded quantum Petri net
with initial marking $\mu_0$, and target marking $\mu_d$.}\\
{\bf Output:} {\rm Report whether $\mu_d$ is reachable from $\mu_0$ with high probability.}\\

\vspace{-5pt}\noindent{\bf Phase I: Superposition over Reachable Markings}\\
\indent
{\bf Step 1:}
{\rm Define $p$ ancillary transition registers
$\alpha_1,\alpha_2,\ldots,\alpha_p$, where $p\le L$ and  each register consists of
$\lceil\log_2(|T|+1)\rceil$ qubits\footnote{ The transition registers encode $|T|$ transitions and no transition state, thus requiring $\lceil\log_2(|T|+1)\rceil$ qubits. If $|T|$ is not a power of
two, some computational basis states do not correspond to any
transition. These states are treated as idle operations, leaving the
marking register unchanged. For example, when $|T|=3$, each
transition register contains two qubits, where $01$, $10$,
and $11$ to encode $T_1$, $T_2$, and $T_3$, respectively, while $00$
represents the no transition.} in
$T$.}

\vspace{2pt}
{\bf Step 2:}
{\rm Define $p$ one-qubit firing registers
$h_1,h_2,\ldots,h_p$, where $h_i= \ket{1}$  indicates that the  transition in $\alpha_i$  has fired and $h_i = \ket{0}$ indicates that it has not fired\footnote{The firing register along with transition register, enables the transition operator $U_T$ to be implemented as a reversible (unitary) quantum operation.},
$1\le i\le p$}.\\
\indent
{\bf Step 3:}
{\rm Encode the initial marking $\mu_0$ in the marking register\footnote{The marking register stores the number of $q$-tokens in each place, and if the QPN contains $n$ $q$-tokens, each place is encoded with
$\lceil\log_2(n+1)\rceil$ qubits, allowing the cardinality of each place
to be represented by the values $0,1,\ldots,n.$ Thus the marking register holds $m \lceil\log_2(n+1)\rceil$ qubits.}
$M$.}

\vspace{1pt}\noindent
{\bf for} {\rm $i = 1$ to $p$}\\
\{
{\bf Step 4:}
{\rm Apply Hadamard gates to all qubits of $\alpha_i$, creating a superposition
over all transitions, including the no transition.}

\vspace{2pt}\indent
{\bf Step 5:}
{\rm Controlled by the transition register $\alpha_i$,
apply the reversible transition operator\footnote{The
operator $U_T$ performs all transitions of the QPN as a reversible
quantum operation simultaneously to update the marking register
by moving the number of $q$-tokens from the input places to the output places according to the
directed labeled arcs of the QPN. For an individual transition $t$ in $T$, the operator $U_t$ is mathematically expressed as
\[
U_t|\mu\rangle|b\rangle
=
\left|\mu+e_t(\mu,v_t)\Delta_t\right\rangle
\left|b\oplus e_t(\mu,v_t)\right\rangle,
\]
where $|\mu\rangle$ represents the current marking, $e_t(\mu,v_t)$ indicates whether transition $t$ is enabled, $\Delta_t$ represents the change in the marking caused by firing $t$, and $|b\rangle$ is the corresponding firing qubit.} $U_T$ to the marking register
$M$ and the corresponding firing register $h_i$. }\\
\}

{\bf Step 6:}
{\rm Let $A$ denote the reversible  state-preparation operator (circuit)
that prepares the Phase-I quantum state.}\\

\vspace{-10pt}\noindent{\bf Phase II: Grover Amplitude Amplification}\\
\indent
{\bf Step 7:}
{\rm Define a phase oracle $O_{\mu_d}$ that compares the contents of the
marking register $M$ with the binary encoding of the target marking
$\mu_d$. If they are equal, the oracle flips the phase of the
corresponding computational basis state. Otherwise, the oracle acts as
the identity.\footnote{The oracle performs the comparison only on the
marking register. If multiple computational basis states encode the same
target marking $\mu_d$, the oracle flips the phase of each such basis
state, regardless of the values stored in the transition register and
firing register.}}

\noindent
{\bf repeat} $r$ times\footnote{$r$ denotes the number of Grover iterations and is expectedly given by $\pi/4 \sqrt{N/M},$ where
$N$ and $M$ represent the total number of basis states generated during Phase I and  the number of generated solutions, respectively.}\\
\{

{\bf Step\, 8:}
{\rm Apply the phase oracle $O_{\mu_d}$.}\\
\indent
{\bf Step 9:}
{\rm Apply the inverse Phase-I state-preparation operator
$A^\dagger$.}\\
\indent
{\bf Step 10:}
{\rm Apply the reflection operator\footnote{$S_0$  flips the phase of
the all-zero computational basis state of the entire quantum register,
including the marking register, transition registers, and
firing register, while leaving all
other computational basis states unchanged. Together with $A$ and
$A^\dagger$, it implements Grover's diffusion operator.}
$S_0$.}\\
\indent
{\bf Step 11:}
{\rm Apply the Phase-I state-preparation operator $A$.}\\
\}\\
\indent
{\bf Step 12:}
{\rm Measure the marking register and obtain a marking $\mu$.}

{\bf if} {\rm $\mu=\mu_d$}
{\rm return {\it  $\mu_d$ can be reached  from $\mu_0$ with high probability}.}

\vspace{-10pt}
\subsection{Algorithm Description}
\vspace{-2pt}\noindent 
As stated, the proposed quantum algorithm consists of two phases. In the first phase, the initial marking is first encoded in the marking register $M$, and then a superposition of reachable markings is generated by iteratively applying the transition operator $U_T$ to the marking register. In each iteration $i$ ($i = 1,2,\ldots,p$), a Hadamard gate is applied to the transition register $\alpha_i$ to prepare an equal superposition over all transitions in $T$ including the no transition.  Controlled by the transition register $\alpha_i$, the reversible transition operator $U_T$ determines whether the selected transition is enabled under the current marking. If it is enabled, $U_T$ updates the marking register by moving the number of $q$-tokens from the input places to the output places according to the directed labeled arcs, and updates the corresponding firing register $h_i$. Otherwise, the marking register remains unchanged. Repeating these steps for $p$ iterations prepares a quantum superposition of all reachable markings from the initial marking, together with their corresponding transitions and firing results. In each iteration, every computational branch advances by at most one transition by simultaneously exploring all enabled transition choices in
superposition. Therefore, the total number of iterations is upper bounded by the number of reachable markings, $L.$

In the second phase, Grover's amplitude amplification algorithm is applied to the superposed states prepared in Phase~I. A quantum oracle compares the marking register $M$ with the binary encoding of the target marking and flips the phase of every computational branch whose marking equals to the target marking, irrespective of the values stored in the transition-selection and firing registers. Each Grover iteration consists of applying the oracle, the inverse state-preparation operator $A^\dagger$, the reflection operator $S_0$, and the state-preparation operator $A$. These iterations amplify the amplitudes of the computational branches corresponding to the target marking while suppressing the amplitudes of the remaining branches. Finally, the marking register is measured to determine whether the target marking is reachable with high probability. 

\vspace{-15pt}
\subsection{Correctness Analysis}

\vspace{-2pt}\noindent
We now prove the correctness of  Algorithm 1.  We first establish
that the transition operator $U_T$ is unitary. Then we show that Phase I prepares a
superposition containing exactly the markings reachable from the initial
marking within the bound $p$. Finally, we prove that the amplitude
amplification phase correctly determines whether the target marking is
reachable with high probability.

\begin{lemma}
The transition operator $U_T$ is unitary.
\end{lemma}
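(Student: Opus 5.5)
The plan is to show that $U_T$ permutes the computational basis states of the joint register (transition register $\alpha_i$, marking register $M$, firing qubit $h_i$). A permutation of an orthonormal basis extends linearly to a unitary. I will write $U_T$ as a direct sum of the individual operators $U_t$ controlled by $\alpha_i$:
\[
U_T=\sum_{t=0}^{2^{\lceil\log_2(|T|+1)\rceil}-1}\ket{t}\!\bra{t}\otimes U_t ,
\]
where $U_0$ and every $U_t$ for an unused code are the identity, as in the footnote to Step 1. Since the projectors $\ket{t}\!\bra{t}$ are orthogonal and sum to the identity, $U_T^\dagger U_T=\sum_t \ket{t}\!\bra{t}\otimes U_t^\dagger U_t$. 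So $U_T$ is unitary as soon as each $U_t$ is, and the proof reduces to a single transition $t$.

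For a single $t$ I will not verify the footnote formula directly as a map on all basis states. Read literally, it can fail to be injective. If $e_t(\mu,v_t)=1$, the state $\ket{\mu}\ket{0}$ is sent to $\ket{\mu+\Delta_t}\ket{1}$. If in addition $e_t(\mu+\Delta_t,v_t)=0$, the state $\ket{\mu+\Delta_t}\ket{1}$ is also sent to $\ket{\mu+\Delta_t}\ket{1}$, so two basis states collide. This is the main obstacle.

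To get around it, I will realize $U_t$ as a product of two operators, each clearly a permutation of basis states:
\[
U_t = C_t\, X_t .
\]
The first factor computes enabledness into the firing qubit:
\[
X_t\ket{\mu}\ket{b}=\ket{\mu}\ket{b\oplus e_t(\mu,v_t)}.
\]
It is its own inverse, because $\mu$ is left unchanged and XOR is an involution. The second factor moves tokens controlled by the firing qubit:
\[
C_t\ket{\mu}\ket{b}=\ket{\mu+b\,\Delta_t}\ket{b}.
\]
Here the addition is taken place by place modulo $2^{\lceil\log_2(n+1)\rceil}$, the register width from the footnote to Step 3. For fixed $b$, the map $\mu\mapsto\mu+b\Delta_t$ is then a translation in a finite group, hence a bijection, with inverse $\ket{\mu}\ket{b}\mapsto\ket{\mu-b\Delta_t}\ket{b}$.

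Composing the two gives $U_t\ket{\mu}\ket{b}=\ket{\mu+(b\oplus e)\Delta_t}\ket{b\oplus e}$ with $e=e_t(\mu,v_t)$. For $b=0$ this coincides exactly with the footnote formula. The case $b=0$ is the only one that arises in Phase I, because each firing register $h_i$ is freshly initialized to $\ket{0}$ and is acted on by exactly one application of $U_T$.

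It remains to check that the modular wraparound never affects the markings that occur. When $e_t=1$ the input places hold enough tokens, and tokens are conserved, so $\mu+\Delta_t$ has all entries in $\{0,\dots,n\}$ and no overflow occurs. Hence $U_T$, as a product and direct sum of basis permutations, is unitary, and on the states produced in Phase I it acts as described in Step 5.
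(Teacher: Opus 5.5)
Your proof is correct, and it takes a genuinely different and more rigorous route than the paper. The paper never decomposes $U_T$. It considers only inputs $\ket{M_i}\ket{\alpha_t}\ket{0}_h$, splits into the fired and unfired cases, and argues that because the transition register and firing qubit are ``preserved,'' the original marking can be recovered, so $U_T$ is unitary. Read charitably, that shows only injectivity on basis states with $h=0$: $U_T$ is an isometry on that subspace and so extends to \emph{some} unitary. The paper never addresses inputs with $h=1$, and these are exactly where the literal footnote formula fails, as you point out. The paper's own example gives such a collision for $T_1$: $\ket{(2,1,0,0,0)}\ket{0}$ and $\ket{(1,0,2,0,0)}\ket{1}$ are both sent to $\ket{(1,0,2,0,0)}\ket{1}$. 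Also, the paper's fired case writes the output firing qubit as $\ket{0}_h$. Taken literally, a branch that fired into $M_i'$ is then indistinguishable from one that started at $M_i'$ and did not fire. The recovery claim therefore needs the firing bit to flip to $\ket{1}$, which is precisely the bookkeeping your factorization makes explicit. Your block-diagonal reduction over $\alpha_i$, followed by $U_t=C_tX_t$, gives a globally defined permutation operator that agrees with Step~5 on the $b=0$ inputs that actually occur. Your no-overflow check ensures the modular arithmetic is invisible on those inputs. This is also exactly how the appendix's Q\# \texttt{UT} is built: it XORs enabledness into \texttt{fired}, uncomputes, then applies \texttt{fired}-controlled increments and decrements that wrap modulo $4$. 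What you give up is proving unitarity of the footnote formula verbatim. Since that formula is not unitary on the full space, and Grover's analysis only needs $A$ unitary with $A\ket{0\cdots 0}$ equal to the Phase-I state, your repaired statement is the right one to prove.
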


\begin{proof}
Consider a superposition of markings, $M_i$  in an arbitrary iteration, and transition $\alpha_{t}.$ The transition of $M_i$  under $U_T$ is given by

\[
U_T: |M_i\rangle |\alpha_t\rangle |0\rangle_h
\xrightarrow{\,T_i\,}
|M'_i\rangle |\alpha_t\rangle |0\rangle_h.
\]

\noindent
where $M'$ is the marking obtained after firing transition $T_i$ in $\alpha_t$. If  $T_i$ in $\alpha_t$ does not fire then 
 \[U_T:
|M_i\rangle |\alpha_t\rangle |0\rangle_h
\xrightarrow{\,T_i\,}
|M_i\rangle |\alpha_t\rangle |\mathrm{0}\rangle_h.
\]

\noindent
In both cases, since the transition register $\alpha_t$  and the  firing qubit are preserved, marking register $M_i$ can be obtained from either $M_i'$ or $M_i$ Hence  $U_T$ is unitary. 
\end{proof}

\begin{theorem}
Algorithm 1 correctly decides whether the target marking $\mu_d$ is reachable
from $\mu_0$ within the bound $p$, with high probability.
\end{theorem}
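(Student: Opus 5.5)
Following the outline announced at the start of this subsection, the proof splits into a Phase~I claim about the support of the prepared state and a Phase~II claim about amplitude amplification, with the preceding Lemma (unitarity of $U_T$) making $A$ a legitimate reversible circuit.

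\emph{Phase I.} Let $|\Psi\rangle = A|0\rangle$ denote the prepared state, with $M$ initialized to $\mu_0$. For $j=0,1,\ldots,p$, let $|\Psi_j\rangle$ be the state after the $j$-th loop iteration. I would prove by induction on $j$ that
\[
|\Psi_j\rangle=\sum_{s\in\{0,1,\ldots,|T|\}^{j}} c_s\, |\mu_s\rangle|s\rangle|h_s\rangle ,
\]
with all $c_s=2^{-jq/2}$, where $q=\lceil\log_2(|T|+1)\rceil$. Here $\mu_s$ is obtained from $\mu_0$ by processing the choices $s_1,\ldots,s_j$ in order: each enabled transition is fired, while disabled transitions, the no-transition code and idle codes leave the marking unchanged.

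\begin{itemize}
\item Base case: $|\Psi_0\rangle=|\mu_0\rangle|0\rangle|0\rangle$.
\item Inductive step: the Hadamards on $\alpha_{j+1}$ split each branch uniformly. The controlled $U_T$, by its defining action $U_t|\mu\rangle|b\rangle=|\mu+e_t\Delta_t\rangle|b\oplus e_t\rangle$, maps each branch to the next marking and records the firing bit.
\end{itemize}

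Since the register basis states $|s\rangle$ are distinct, the branches stay orthogonal and no interference occurs. From this, two facts follow:
\begin{itemize}
\item every marking in the support is reachable from $\mu_0$ by a firing sequence of length at most $p$;
\item conversely, every marking reachable in at most $p$ firings occurs in the support, by choosing that sequence padded with no-transition codes.
\end{itemize}

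\emph{Phase II.} Let $N=2^{pq}$ and let $G$ be the number of sequences $s$ with $\mu_s=\mu_d$, so that $a=\|\Pi_{\mu_d}|\Psi\rangle\|^2=G/N$. If $\mu_d$ is not reachable within $p$ steps, then $G=0$ and $O_{\mu_d}$ acts as the identity on $|\Psi\rangle$. Then $A S_0 A^\dagger$ only multiplies the state by a global sign, so measurement never returns $\mu_d$ and the algorithm correctly does not report reachability.

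If $G\ge1$, I would invoke the standard amplitude amplification theorem (Brassard--H{\o}yer--Mosca--Tapp). The operator $Q=-A S_0A^\dagger O_{\mu_d}$ acts as a rotation by $2\theta$ in the plane spanned by the good and bad components, where $\sin^2\theta=a$. After $r=\lfloor \pi/(4\theta)\rfloor\approx \frac{\pi}{4}\sqrt{N/G}$ iterations, the success probability $\sin^2((2r+1)\theta)$ is at least $\max(1-a,a)\ge 1/2$, and it is close to $1$ when $a$ is small. Measuring $M$ then yields $\mu_d$ with high probability.

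\emph{Main obstacle.} The hard step is that $G$, and hence $r$, is unknown. I would handle it with the exponential-search version of Boyer--Brassard--H{\o}yer--Tapp: choose $r$ uniformly at random from growing ranges. This gives constant success probability in expected $O(\sqrt{N/G})$ oracle calls, and since $G\ge1$ this is at most $O(\sqrt N)$. Repeating a constant number of times then boosts the success probability to any fixed bound, and one-sided error is preserved, because a false positive is impossible: the output is checked classically against $\mu_d$. Finally I would note that the bound $p\le L$ suffices for full reachability only because a shortest firing sequence to $\mu_d$ visits distinct markings, so its length is at most $L-1$.
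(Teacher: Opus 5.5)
Your proposal is correct and follows the paper's proof. The paper argues that Phase~I leaves exactly the markings reachable in at most $p$ firings in the marking register; if $\mu_d$ is among them, the oracle marks a nonempty set and Grover amplification boosts it, and otherwise $O_{\mu_d}$ fixes $A|0\rangle$, so $\mu_d$ keeps amplitude zero. Beyond the paper, you prove the support claim by induction, compute the good fraction $a=G/2^{pq}$ explicitly, and handle the unknown number of good branches by exponential search, whereas the paper's $r=\frac{\pi}{4}\sqrt{N/M}$ presupposes $M$ is known. One small fix: index $s$ over all $2^q$ codes, $\{0,1,\ldots,2^q-1\}^j$, so that $c_s=2^{-jq/2}$ is normalized and consistent with your $N=2^{pq}$.
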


\begin{proof}
It is easy to show that  after Phase I, the marking
register contains all and only markings reachable from
$\mu_0$ after $p$ iterations of firing. Suppose first that $\mu_d$ is reachable from $\mu_0$ within $p$ 
iterations in Phase-I. Then,  there must exist at least one basis state in
the Phase-I state whose marking register is equal to $\mu_d$. The oracle
flips the phase of exactly those basis states whose marking register is equal to
$\mu_d$. Now, by Lemma 1, $U_T$ is unitary, and therefore the Grover diffusion operator in Phase II
amplifies the amplitude of the target marking. After $r$
iterations, measuring the marking register returns $\mu_d$ with high
probability. Hence, the algorithm returns {\em Reachable} with high probability.
On the other hand, if $\mu_d$ is not reachable from $\mu_0$ within $p$
iterations of Phase I then no basis state in the Phase-I would have  a marking, which is equal to $\mu_d$. Therefore, the oracle
$O_{\mu_d}$ marks no basis state, and the amplitude of $\mu_d$ in the marking register remains zero.
Consequently, measuring the marking register cannot return $\mu_d$, and the
algorithm returns {\em Not Reachable}. Hence, Algorithm 1 correctly proves bounded reachability of $\mu_d$ from
$\mu_0$ within the bound $p$, with high probability.
\end{proof}

\vspace{-15pt}
\section{Example}

\noindent
In this section, we illustrate the proposed algorithm through a detailed example.
\begin{figure}[t]
\centering
\includegraphics[height=6 cm, width=\linewidth]{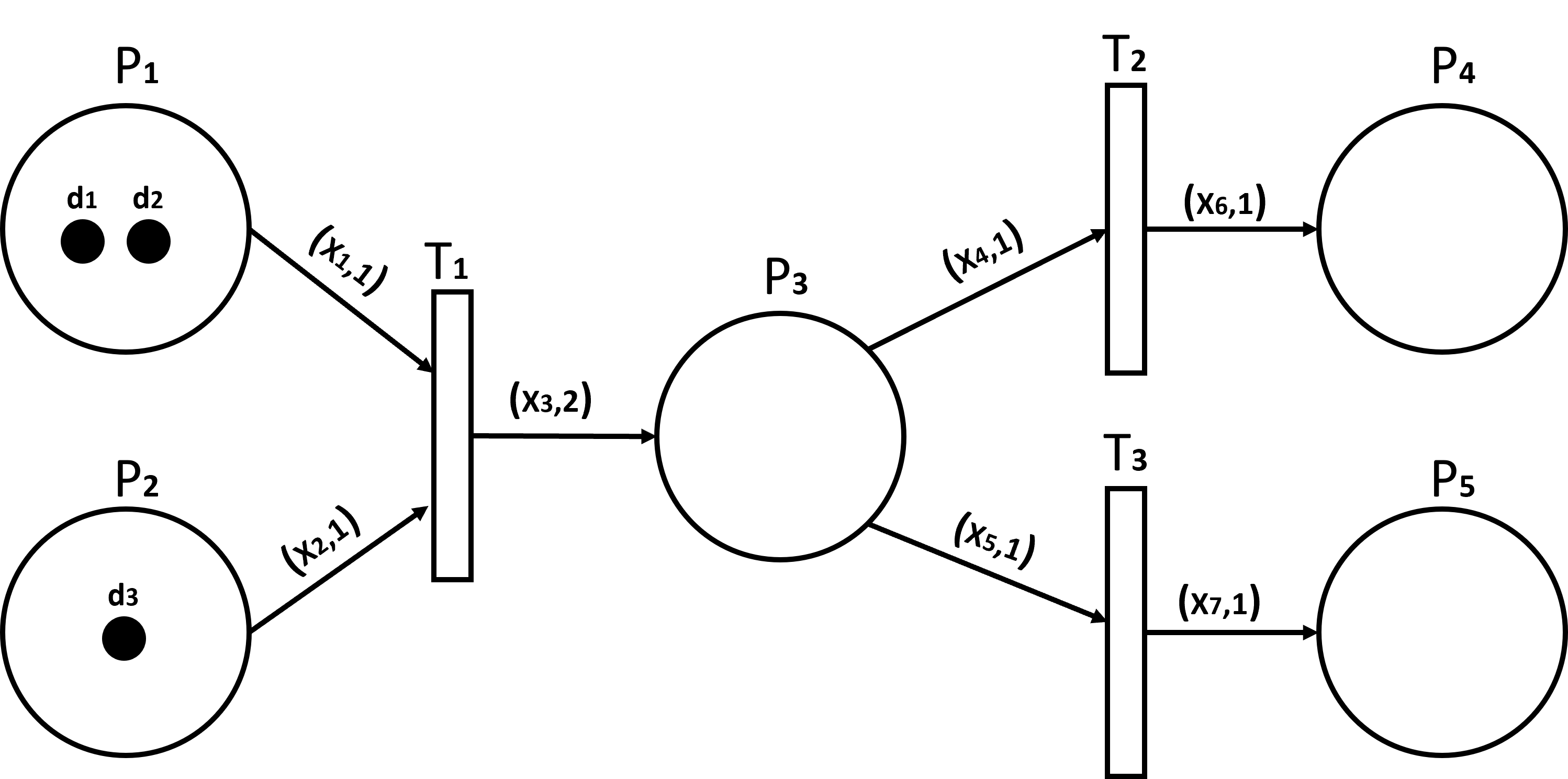}
\caption{Illustrative example of the proposed reachability algorithm, with $m=5$ places, $k=3$ transitions, and $n=3$ $q$-tokens.}
\label{fig:branching}
\end{figure} 
Consider the QPN in Figure~\ref{fig:branching}, which consists of
$n=3$ $q$-tokens $d_1,d_2,d_3$, $m=5$ places
$P_1,P_2,P_3,P_4,P_5$, $k=3$ transitions
$T_1,T_2,T_3$, and $z=7$ directed arcs
$x_1,x_2,\ldots,x_7$, with their firing capacities indicated next to
each arc. As QPN contains only $3$ $q$-tokens, each place may contain between $0$ and $3$
$q$-tokens, which can be represented by the two-bit encodings
$\ket{00}$, $\ket{01}$, $\ket{10}$, and $\ket{11}$, respectively. Therefore, the marking register consists of ten qubits, with two qubits allocated to each place.
Since the QPN contains three transitions, each transition
register $\alpha_i$
($i=1,2,\ldots,p$) consists of two qubits. The encodings $\ket{00}, \ket{01}$, $\ket{10}$, and $\ket{11}$ represent the selection of no transition, $T_1$, $T_2$, and $T_3$, respectively. For illustrative purposes, the algorithm is run for $p=3$ iterations. Consequently, three transition registers
$\alpha_1,\alpha_2,\alpha_3$ and three corresponding firing registers $h_1,h_2,h_3$ are used. 

\begin{table}[t]
\centering
\vspace{-5pt}
\footnotesize
\renewcommand{\arraystretch}{1.15}
\setlength{\tabcolsep}{2.5pt}
\setlength{\arrayrulewidth}{0.7pt}

\begin{tabular*}{\columnwidth}{@{\extracolsep{\fill}}|c|c|c|c|c|c|}
\hline
\textbf{Ancillary Registers} &
\multicolumn{5}{c|}{\textbf{Marking of Places}} \\

\begin{tabular}{@{}>{\centering\arraybackslash}c@{\hspace{8pt}}
                @{\hspace{8pt}}>{\centering\arraybackslash}c@{}}
$\alpha_1$ & $h_1$
\end{tabular}
&
\multicolumn{1}{c}{$P_1$} &
\multicolumn{1}{c}{$P_2$} &
\multicolumn{1}{c}{$P_3$} &
\multicolumn{1}{c}{$P_4$} &
\multicolumn{1}{c|}{$P_5$}
\\
\hline

{\renewcommand{\arraystretch}{1.15}
\begin{tabular}{@{}>{\centering\arraybackslash}c@{\hspace{6pt}}|
                @{\hspace{6pt}}>{\centering\arraybackslash}c@{}}
$\ket{00}$ & $\ket{0}$
\end{tabular}}
&
$\ket{10}$ &
$\ket{01}$ &
$\ket{00}$ &
$\ket{00}$ &
$\ket{00}$
\\
\hline

\end{tabular*}
\caption{Initial Quantum register configurations.}
\vspace{-15pt}
\label{tab:zero_iteration}
\end{table}

\begin{table}[t]
\centering

\footnotesize
\renewcommand{\arraystretch}{1.15}
\setlength{\tabcolsep}{2.5pt}
\setlength{\arrayrulewidth}{0.7pt}

\begin{tabular*}{\columnwidth}{@{\extracolsep{\fill}}|c|c|c|c|c|c|}
\hline
\textbf{Ancillary Registers} &
\multicolumn{5}{c|}{\textbf{Marking of Places}} \\

\begin{tabular}{@{}>{\centering\arraybackslash}c@{\hspace{8pt}}
                @{\hspace{8pt}}>{\centering\arraybackslash}c@{}}
$\alpha_1$ & $h_1$
\end{tabular}
&
\multicolumn{1}{c}{$P_1$} &
\multicolumn{1}{c}{$P_2$} &
\multicolumn{1}{c}{$P_3$} &
\multicolumn{1}{c}{$P_4$} &
\multicolumn{1}{c|}{$P_5$}
\\
\hline

{\renewcommand{\arraystretch}{1.15}
\begin{tabular}{@{}>{\centering\arraybackslash}c@{\hspace{6pt}}|
                @{\hspace{6pt}}>{\centering\arraybackslash}c@{}}
$\ket{00}$ & $\ket{0}$\\
$\ket{10}$ & $\ket{0}$\\
$\ket{11}$ & $\ket{0}$
\end{tabular}}
&
$\ket{10}$ &
$\ket{01}$ &
$\ket{00}$ &
$\ket{00}$ &
$\ket{00}$
\\
\hline

{\renewcommand{\arraystretch}{1.15}
\begin{tabular}{@{}>{\centering\arraybackslash}c@{\hspace{6pt}}|
                @{\hspace{6pt}}>{\centering\arraybackslash}c@{}}
$\ket{01}$ & $\ket{1}$
\end{tabular}}
&
$\ket{01}$ &
$\ket{00}$ &
$\ket{10}$ &
$\ket{00}$ &
$\ket{00}$
\\
\hline

\end{tabular*}
\caption{Quantum register configurations after the first transition-selection stage.}
\label{tab:first_iteration}
\end{table}
\noindent 
Table~\ref{tab:zero_iteration} shows the initial configuration of the
quantum registers, including the marking and ancillary registers. The markings are represented by their cardinality vectors. For example, the vector $(P_1,P_2,P_3,P_4,P_5)=(\ket{10},\ket{01},\ket{00},\ket{00},\ket{00})$ indicates that two $q$-token is located in place $P_1$, one $q$-tokens are located in place $P_2$, and no $q$-tokens are present in the remaining places.\\
\indent
In the first iteration, Hadamard gates are applied to the qubits of the transition register $\alpha_1$, creating an equal superposition over the four possible selections, i.e., $\ket{00}$ for the no transition, $\ket{01}$ for $T_1$, $\ket{10}$ for $T_2$, and $\ket{11}$ for
$T_3$. The reversible transition operator $U_T$ is then applied to fire the
selected transition, if it is enabled under the current marking. Given that the initial marking register is $M =\ket{10,01,00,00,00}$ and $T_1$ is connected to $P_1$ and $P_2,$ only transition $T_1$ is enabled. Consequently,  in the  basis state, where $\alpha_1=\ket{01}$, and the operator $U_T$ fires $T_1$, moving the $q$-tokens $d_1$ and $d_3$ from places $P_1$ and $P_2$ to place $P_3$, and stores the transition fired in the firing register $h_1$. For the rest of the basis states, where $\alpha_i \in \{\ket{00},\ket{10},\ket{11}\},$ the marking register and the firing register remain unchanged because no transition is enabled. The resulting quantum register configurations are shown in Table~\ref{tab:first_iteration}.

\indent
In the second iteration, Hadamard gates are applied to the qubits of the transition-selection register $\alpha_2$, creating an equal superposition over the four possible selections i.e. $\ket{00}$ for the no transition, $\ket{01}$ for $T_1$, $\ket{10}$ for $T_2$, and $\ket{11}$ for
$T_3$. The reversible transition operator $U_T$ is then applied to fire the
selected transition, if it is enabled under the current marking.The branch with marking register
$M=\ket{10,01,00,00,00}$ evolves into
$M=\ket{01,00,10,00,00}$.
Subsequently, the branch with marking register
$M=\ket{01,00,10,00,00}$ evolves into the branches
$M=\ket{01,00,01,01,00}$ and
$M=\ket{01,00,01,00,01}$. In contrast, selecting the no transition or a transition that is not
enabled leaves both the marking and firing registers unchanged.  The resulting quantum register configurations after the second iteration are shown in Table~\ref{tab:second_iteration}. 

\begin{table}[t]
\centering
\caption{Quantum register configurations after the second transition-selection stage.}
\label{tab:second_iteration}
\footnotesize
\renewcommand{\arraystretch}{1.15}
\setlength{\tabcolsep}{2.5pt}
\setlength{\arrayrulewidth}{0.7pt}

\begin{tabular*}{\columnwidth}{@{\extracolsep{\fill}}|c|c|c|c|c|c|}
\hline
\textbf{Ancillary Registers} &
\multicolumn{5}{c|}{\textbf{Marking of Places}} \\

\begin{tabular}{@{}>{\centering\arraybackslash}c@{\hspace{8pt}}
                @{\hspace{8pt}}>{\centering\arraybackslash}c@{\hspace{8pt}}
                @{\hspace{8pt}}>{\centering\arraybackslash}c@{\hspace{8pt}}
                @{\hspace{8pt}}>{\centering\arraybackslash}c@{}}
$\alpha_1$ & $h_1$ & $\alpha_2$ & $h_2$
\end{tabular}
&
\multicolumn{1}{c}{$P_1$} &
\multicolumn{1}{c}{$P_2$} &
\multicolumn{1}{c}{$P_3$} &
\multicolumn{1}{c}{$P_4$} &
\multicolumn{1}{c|}{$P_5$}
\\
\hline

{\renewcommand{\arraystretch}{1.15}
\begin{tabular}{@{}>{\centering\arraybackslash}c@{\hspace{6pt}}|
                @{\hspace{6pt}}>{\centering\arraybackslash}c@{\hspace{6pt}}|
                @{\hspace{6pt}}>{\centering\arraybackslash}c@{\hspace{6pt}}|
                @{\hspace{6pt}}>{\centering\arraybackslash}c@{}}
$\ket{00}$ & $\ket{0}$ & $\ket{00}$ & $\ket{0}$\\
$\ket{00}$ & $\ket{0}$ & $\ket{10}$ & $\ket{0}$\\
$\ket{00}$ & $\ket{0}$ & $\ket{11}$ & $\ket{0}$\\
$\ket{10}$ & $\ket{0}$ & $\ket{00}$ & $\ket{0}$\\
$\ket{10}$ & $\ket{0}$ & $\ket{10}$ & $\ket{0}$\\
$\ket{10}$ & $\ket{0}$ & $\ket{11}$ & $\ket{0}$\\
$\ket{11}$ & $\ket{0}$ & $\ket{00}$ & $\ket{0}$\\
$\ket{11}$ & $\ket{0}$ & $\ket{10}$ & $\ket{0}$\\
$\ket{11}$ & $\ket{0}$ & $\ket{11}$ & $\ket{0}$
\end{tabular}}
&
$\ket{10}$ & $\ket{01}$ & $\ket{00}$ & $\ket{00}$ & $\ket{00}$
\\
\hline

{\renewcommand{\arraystretch}{1.15}
\begin{tabular}{@{}>{\centering\arraybackslash}c@{\hspace{6pt}}|
                @{\hspace{6pt}}>{\centering\arraybackslash}c@{\hspace{6pt}}|
                @{\hspace{6pt}}>{\centering\arraybackslash}c@{\hspace{6pt}}|
                @{\hspace{6pt}}>{\centering\arraybackslash}c@{}}
$\ket{00}$ & $\ket{0}$ & $\ket{01}$ & $\ket{1}$\\
$\ket{01}$ & $\ket{1}$ & $\ket{00}$ & $\ket{0}$\\
$\ket{01}$ & $\ket{1}$ & $\ket{01}$ & $\ket{0}$\\
$\ket{10}$ & $\ket{0}$ & $\ket{01}$ & $\ket{1}$\\
$\ket{11}$ & $\ket{0}$ & $\ket{01}$ & $\ket{1}$
\end{tabular}}
&
$\ket{01}$ & $\ket{00}$ & $\ket{10}$ & $\ket{00}$ & $\ket{00}$
\\
\hline

{\renewcommand{\arraystretch}{1.15}
\begin{tabular}{@{}>{\centering\arraybackslash}c@{\hspace{6pt}}|
                @{\hspace{6pt}}>{\centering\arraybackslash}c@{\hspace{6pt}}|
                @{\hspace{6pt}}>{\centering\arraybackslash}c@{\hspace{6pt}}|
                @{\hspace{6pt}}>{\centering\arraybackslash}c@{}}
$\ket{01}$ & $\ket{1}$ & $\ket{10}$ & $\ket{1}$
\end{tabular}}
&
$\ket{01}$ & $\ket{00}$ & $\ket{01}$ & $\ket{01}$ & $\ket{00}$
\\
\hline

{\renewcommand{\arraystretch}{1.15}
\begin{tabular}{@{}>{\centering\arraybackslash}c@{\hspace{6pt}}|
                @{\hspace{6pt}}>{\centering\arraybackslash}c@{\hspace{6pt}}|
                @{\hspace{6pt}}>{\centering\arraybackslash}c@{\hspace{6pt}}|
                @{\hspace{6pt}}>{\centering\arraybackslash}c@{}}
$\ket{01}$ & $\ket{1}$ & $\ket{11}$ & $\ket{1}$
\end{tabular}}
&
$\ket{01}$ & $\ket{00}$ & $\ket{01}$ & $\ket{00}$ & $\ket{01}$
\\
\hline

\end{tabular*}
\vspace{-15pt}
\end{table}

\indent
In the third iteration, Hadamard gates are applied to the qubits of the transition-selection register $\alpha_3$, creating an equal superposition over the four possible selections i.e.  $\ket{00}$ for the no transition, $\ket{01}$ for $T_1$, $\ket{10}$ for $T_2$, and $\ket{11}$ for
$T_3$. The reversible transition operator $U_T$ is then applied to fire the
selected transition, if it is enabled under the current marking. he branch with marking register
$M=\ket{10,01,00,00,00}$ evolves into
$M=\ket{01,00,10,00,00}$. The branch with marking register
$M=\ket{01,00,10,00,00}$ evolves into the branches
$M=\ket{01,00,01,01,00}$ and
$M=\ket{01,00,01,00,01}$.
The branch with marking register
$M=\ket{01,00,01,01,00}$ further evolves into
$M=\ket{01,00,00,10,00}$ and
$M=\ket{01,00,00,01,01}$.
Similarly, the branch with marking register
$M=\ket{01,00,01,00,01}$ evolves into
$M=\ket{01,00,00,00,10}$ and
$M=\ket{01,00,00,01,01}$. In contrast, selecting the no transition or a transition that is not
enabled leaves both the marking and firing registers unchanged. The resulting quantum register configurations after the third iteration are shown in Table~\ref{tab:third_iteration}.

After the third iteration, Phase-I terminates. Consequently, the quantum
register contains a superposition of all reachable markings generated in $p$ iterations ($p = 3$ in the example), together with the corresponding transition and firing registers. 

In Phase II, Grover's amplitude amplification algorithm is applied to the quantum register prepared in Phase-I. Let $A$ denote the reversible Phase-I state-preparation operator that generated Phase-1 states. A phase oracle $O_{\mu_d}$ is constructed to identify the desired target marking $\mu_d$ by comparing the marking register $M$ with the binary encoding of $\mu_d$ and flipping the phase of every computational branch whose marking equals the target marking.  
\begin{table}[H]
\centering
\caption{Quantum register configurations after the third transition-selection stage.}
\label{tab:third_iteration}
\scriptsize
\renewcommand{\arraystretch}{1.15}
\setlength{\tabcolsep}{2.0pt}
\setlength{\arrayrulewidth}{0.7pt}

\begin{tabular*}{\columnwidth}{@{\extracolsep{\fill}}|c|c|c|c|c|c|}
\hline
\textbf{Ancillary Registers} &
\multicolumn{5}{c|}{\textbf{Marking of Places}}  \\

\begin{tabular}{@{}>{\centering\arraybackslash}c@{\hspace{5pt}}
                @{\hspace{5pt}}>{\centering\arraybackslash}c@{\hspace{5pt}}
                @{\hspace{5pt}}>{\centering\arraybackslash}c@{\hspace{5pt}}
                @{\hspace{5pt}}>{\centering\arraybackslash}c@{\hspace{5pt}}
                @{\hspace{5pt}}>{\centering\arraybackslash}c@{\hspace{5pt}}
                @{\hspace{5pt}}>{\centering\arraybackslash}c@{}}
$\alpha_1$ & $h_1$ & $\alpha_2$ & $h_2$ & $\alpha_3$ & $h_3$
\end{tabular}
&
\multicolumn{1}{c}{$P_1$} &
\multicolumn{1}{c}{$P_2$} &
\multicolumn{1}{c}{$P_3$} &
\multicolumn{1}{c}{$P_4$} &
\multicolumn{1}{c|}{$P_5$}
\\
\hline

{\renewcommand{\arraystretch}{1.15}
\begin{tabular}{@{}>{\centering\arraybackslash}c@{\hspace{3pt}}|
                @{\hspace{3pt}}>{\centering\arraybackslash}c@{\hspace{3pt}}|
                @{\hspace{3pt}}>{\centering\arraybackslash}c@{\hspace{3pt}}|
                @{\hspace{3pt}}>{\centering\arraybackslash}c@{\hspace{3pt}}|
                @{\hspace{3pt}}>{\centering\arraybackslash}c@{\hspace{3pt}}|
                @{\hspace{3pt}}>{\centering\arraybackslash}c@{}}
$\ket{00}$ & $\ket{0}$ & $\ket{00}$ & $\ket{0}$ & $\ket{00}$ & $\ket{0}$\\
$\ket{00}$ & $\ket{0}$ & $\ket{00}$ & $\ket{0}$ & $\ket{10}$ & $\ket{0}$\\
$\ket{00}$ & $\ket{0}$ & $\ket{00}$ & $\ket{0}$ & $\ket{11}$ & $\ket{0}$\\
$\ket{00}$ & $\ket{0}$ & $\ket{10}$ & $\ket{0}$ & $\ket{00}$ & $\ket{0}$\\
$\ket{00}$ & $\ket{0}$ & $\ket{10}$ & $\ket{0}$ & $\ket{10}$ & $\ket{0}$\\
$\ket{00}$ & $\ket{0}$ & $\ket{10}$ & $\ket{0}$ & $\ket{11}$ & $\ket{0}$\\
$\ket{00}$ & $\ket{0}$ & $\ket{11}$ & $\ket{0}$ & $\ket{00}$ & $\ket{0}$\\
$\ket{00}$ & $\ket{0}$ & $\ket{11}$ & $\ket{0}$ & $\ket{10}$ & $\ket{0}$\\
$\ket{00}$ & $\ket{0}$ & $\ket{11}$ & $\ket{0}$ & $\ket{11}$ & $\ket{0}$\\
$\ket{10}$ & $\ket{0}$ & $\ket{00}$ & $\ket{0}$ & $\ket{00}$ & $\ket{0}$\\
$\ket{10}$ & $\ket{0}$ & $\ket{00}$ & $\ket{0}$ & $\ket{10}$ & $\ket{0}$\\
$\ket{10}$ & $\ket{0}$ & $\ket{00}$ & $\ket{0}$ & $\ket{11}$ & $\ket{0}$\\
$\ket{10}$ & $\ket{0}$ & $\ket{10}$ & $\ket{0}$ & $\ket{00}$ & $\ket{0}$\\
$\ket{10}$ & $\ket{0}$ & $\ket{10}$ & $\ket{0}$ & $\ket{10}$ & $\ket{0}$\\
$\ket{10}$ & $\ket{0}$ & $\ket{10}$ & $\ket{0}$ & $\ket{11}$ & $\ket{0}$\\
$\ket{10}$ & $\ket{0}$ & $\ket{11}$ & $\ket{0}$ & $\ket{00}$ & $\ket{0}$\\
$\ket{10}$ & $\ket{0}$ & $\ket{11}$ & $\ket{0}$ & $\ket{10}$ & $\ket{0}$\\
$\ket{10}$ & $\ket{0}$ & $\ket{11}$ & $\ket{0}$ & $\ket{11}$ & $\ket{0}$\\
$\ket{11}$ & $\ket{0}$ & $\ket{00}$ & $\ket{0}$ & $\ket{00}$ & $\ket{0}$\\
$\ket{11}$ & $\ket{0}$ & $\ket{00}$ & $\ket{0}$ & $\ket{10}$ & $\ket{0}$\\
$\ket{11}$ & $\ket{0}$ & $\ket{00}$ & $\ket{0}$ & $\ket{11}$ & $\ket{0}$\\
$\ket{11}$ & $\ket{0}$ & $\ket{10}$ & $\ket{0}$ & $\ket{00}$ & $\ket{0}$\\
$\ket{11}$ & $\ket{0}$ & $\ket{10}$ & $\ket{0}$ & $\ket{10}$ & $\ket{0}$\\
$\ket{11}$ & $\ket{0}$ & $\ket{10}$ & $\ket{0}$ & $\ket{11}$ & $\ket{0}$\\
$\ket{11}$ & $\ket{0}$ & $\ket{11}$ & $\ket{0}$ & $\ket{00}$ & $\ket{0}$\\
$\ket{11}$ & $\ket{0}$ & $\ket{11}$ & $\ket{0}$ & $\ket{10}$ & $\ket{0}$\\
$\ket{11}$ & $\ket{0}$ & $\ket{11}$ & $\ket{0}$ & $\ket{11}$ & $\ket{0}$
\end{tabular}}
&
$\ket{10}$ & $\ket{01}$ & $\ket{00}$ & $\ket{00}$ & $\ket{00}$
\\
\hline

{\renewcommand{\arraystretch}{1.15}
\begin{tabular}{@{}>{\centering\arraybackslash}c@{\hspace{3pt}}|
                @{\hspace{3pt}}>{\centering\arraybackslash}c@{\hspace{3pt}}|
                @{\hspace{3pt}}>{\centering\arraybackslash}c@{\hspace{3pt}}|
                @{\hspace{3pt}}>{\centering\arraybackslash}c@{\hspace{3pt}}|
                @{\hspace{3pt}}>{\centering\arraybackslash}c@{\hspace{3pt}}|
                @{\hspace{3pt}}>{\centering\arraybackslash}c@{}}
$\ket{00}$ & $\ket{0}$ & $\ket{00}$ & $\ket{0}$ & $\ket{01}$ & $\ket{1}$\\
$\ket{00}$ & $\ket{0}$ & $\ket{01}$ & $\ket{1}$ & $\ket{00}$ & $\ket{0}$\\
$\ket{00}$ & $\ket{0}$ & $\ket{01}$ & $\ket{1}$ & $\ket{01}$ & $\ket{0}$\\
$\ket{00}$ & $\ket{0}$ & $\ket{10}$ & $\ket{0}$ & $\ket{01}$ & $\ket{1}$\\
$\ket{00}$ & $\ket{0}$ & $\ket{11}$ & $\ket{0}$ & $\ket{01}$ & $\ket{1}$\\
$\ket{01}$ & $\ket{1}$ & $\ket{00}$ & $\ket{0}$ & $\ket{00}$ & $\ket{0}$\\
$\ket{01}$ & $\ket{1}$ & $\ket{00}$ & $\ket{0}$ & $\ket{01}$ & $\ket{0}$\\
$\ket{01}$ & $\ket{1}$ & $\ket{01}$ & $\ket{0}$ & $\ket{00}$ & $\ket{0}$\\
$\ket{01}$ & $\ket{1}$ & $\ket{01}$ & $\ket{0}$ & $\ket{01}$ & $\ket{0}$\\
$\ket{10}$ & $\ket{0}$ & $\ket{00}$ & $\ket{0}$ & $\ket{01}$ & $\ket{1}$\\
$\ket{10}$ & $\ket{0}$ & $\ket{01}$ & $\ket{1}$ & $\ket{00}$ & $\ket{0}$\\
$\ket{10}$ & $\ket{0}$ & $\ket{01}$ & $\ket{1}$ & $\ket{01}$ & $\ket{0}$\\
$\ket{10}$ & $\ket{0}$ & $\ket{10}$ & $\ket{0}$ & $\ket{01}$ & $\ket{1}$\\
$\ket{10}$ & $\ket{0}$ & $\ket{11}$ & $\ket{0}$ & $\ket{01}$ & $\ket{1}$\\
$\ket{11}$ & $\ket{0}$ & $\ket{00}$ & $\ket{0}$ & $\ket{01}$ & $\ket{1}$\\
$\ket{11}$ & $\ket{0}$ & $\ket{01}$ & $\ket{1}$ & $\ket{00}$ & $\ket{0}$\\
$\ket{11}$ & $\ket{0}$ & $\ket{01}$ & $\ket{1}$ & $\ket{01}$ & $\ket{0}$\\
$\ket{11}$ & $\ket{0}$ & $\ket{10}$ & $\ket{0}$ & $\ket{01}$ & $\ket{1}$\\
$\ket{11}$ & $\ket{0}$ & $\ket{11}$ & $\ket{0}$ & $\ket{01}$ & $\ket{1}$
\end{tabular}}
&
$\ket{01}$ & $\ket{00}$ & $\ket{10}$ & $\ket{00}$ & $\ket{00}$
\\
\hline

{\renewcommand{\arraystretch}{1.15}
\begin{tabular}{@{}>{\centering\arraybackslash}c@{\hspace{3pt}}|
                @{\hspace{3pt}}>{\centering\arraybackslash}c@{\hspace{3pt}}|
                @{\hspace{3pt}}>{\centering\arraybackslash}c@{\hspace{3pt}}|
                @{\hspace{3pt}}>{\centering\arraybackslash}c@{\hspace{3pt}}|
                @{\hspace{3pt}}>{\centering\arraybackslash}c@{\hspace{3pt}}|
                @{\hspace{3pt}}>{\centering\arraybackslash}c@{}}
$\ket{00}$ & $\ket{0}$ & $\ket{01}$ & $\ket{1}$ & $\ket{10}$ & $\ket{1}$\\
$\ket{01}$ & $\ket{1}$ & $\ket{00}$ & $\ket{0}$ & $\ket{10}$ & $\ket{1}$\\
$\ket{01}$ & $\ket{1}$ & $\ket{01}$ & $\ket{0}$ & $\ket{10}$ & $\ket{1}$\\
$\ket{01}$ & $\ket{1}$ & $\ket{10}$ & $\ket{1}$ & $\ket{00}$ & $\ket{0}$\\
$\ket{01}$ & $\ket{1}$ & $\ket{10}$ & $\ket{1}$ & $\ket{01}$ & $\ket{0}$\\
$\ket{10}$ & $\ket{0}$ & $\ket{01}$ & $\ket{1}$ & $\ket{10}$ & $\ket{1}$\\
$\ket{11}$ & $\ket{0}$ & $\ket{01}$ & $\ket{1}$ & $\ket{10}$ & $\ket{1}$
\end{tabular}}
&
$\ket{01}$ & $\ket{00}$ & $\ket{01}$ & $\ket{01}$ & $\ket{00}$
\\
\hline

{\renewcommand{\arraystretch}{1.15}
\begin{tabular}{@{}>{\centering\arraybackslash}c@{\hspace{3pt}}|
                @{\hspace{3pt}}>{\centering\arraybackslash}c@{\hspace{3pt}}|
                @{\hspace{3pt}}>{\centering\arraybackslash}c@{\hspace{3pt}}|
                @{\hspace{3pt}}>{\centering\arraybackslash}c@{\hspace{3pt}}|
                @{\hspace{3pt}}>{\centering\arraybackslash}c@{\hspace{3pt}}|
                @{\hspace{3pt}}>{\centering\arraybackslash}c@{}}
$\ket{00}$ & $\ket{0}$ & $\ket{01}$ & $\ket{1}$ & $\ket{11}$ & $\ket{1}$\\
$\ket{01}$ & $\ket{1}$ & $\ket{00}$ & $\ket{0}$ & $\ket{11}$ & $\ket{1}$\\
$\ket{01}$ & $\ket{1}$ & $\ket{01}$ & $\ket{0}$ & $\ket{11}$ & $\ket{1}$\\
$\ket{01}$ & $\ket{1}$ & $\ket{11}$ & $\ket{1}$ & $\ket{00}$ & $\ket{0}$\\
$\ket{01}$ & $\ket{1}$ & $\ket{11}$ & $\ket{1}$ & $\ket{01}$ & $\ket{0}$\\
$\ket{10}$ & $\ket{0}$ & $\ket{01}$ & $\ket{1}$ & $\ket{11}$ & $\ket{1}$\\
$\ket{11}$ & $\ket{0}$ & $\ket{01}$ & $\ket{1}$ & $\ket{11}$ & $\ket{1}$
\end{tabular}}
&
$\ket{01}$ & $\ket{00}$ & $\ket{01}$ & $\ket{00}$ & $\ket{01}$
\\
\hline

{\renewcommand{\arraystretch}{1.15}
\begin{tabular}{@{}>{\centering\arraybackslash}c@{\hspace{3pt}}|
                @{\hspace{3pt}}>{\centering\arraybackslash}c@{\hspace{3pt}}|
                @{\hspace{3pt}}>{\centering\arraybackslash}c@{\hspace{3pt}}|
                @{\hspace{3pt}}>{\centering\arraybackslash}c@{\hspace{3pt}}|
                @{\hspace{3pt}}>{\centering\arraybackslash}c@{\hspace{3pt}}|
                @{\hspace{3pt}}>{\centering\arraybackslash}c@{}}
$\ket{01}$ & $\ket{1}$ & $\ket{10}$ & $\ket{1}$ & $\ket{10}$ & $\ket{1}$
\end{tabular}}
&
$\ket{01}$ & $\ket{00}$ & $\ket{00}$ & $\ket{10}$ & $\ket{00}$
\\
\hline

{\renewcommand{\arraystretch}{1.15}
\begin{tabular}{@{}>{\centering\arraybackslash}c@{\hspace{3pt}}|
                @{\hspace{3pt}}>{\centering\arraybackslash}c@{\hspace{3pt}}|
                @{\hspace{3pt}}>{\centering\arraybackslash}c@{\hspace{3pt}}|
                @{\hspace{3pt}}>{\centering\arraybackslash}c@{\hspace{3pt}}|
                @{\hspace{3pt}}>{\centering\arraybackslash}c@{\hspace{3pt}}|
                @{\hspace{3pt}}>{\centering\arraybackslash}c@{}}
$\ket{01}$ & $\ket{1}$ & $\ket{10}$ & $\ket{1}$ & $\ket{11}$ & $\ket{1}$\\
$\ket{01}$ & $\ket{1}$ & $\ket{11}$ & $\ket{1}$ & $\ket{10}$ & $\ket{1}$
\end{tabular}}
&
$\ket{01}$ & $\ket{00}$ & $\ket{00}$ & $\ket{01}$ & $\ket{01}$
\\
\hline

{\renewcommand{\arraystretch}{1.15}
\begin{tabular}{@{}>{\centering\arraybackslash}c@{\hspace{3pt}}|
                @{\hspace{3pt}}>{\centering\arraybackslash}c@{\hspace{3pt}}|
                @{\hspace{3pt}}>{\centering\arraybackslash}c@{\hspace{3pt}}|
                @{\hspace{3pt}}>{\centering\arraybackslash}c@{\hspace{3pt}}|
                @{\hspace{3pt}}>{\centering\arraybackslash}c@{\hspace{3pt}}|
                @{\hspace{3pt}}>{\centering\arraybackslash}c@{}}
$\ket{01}$ & $\ket{1}$ & $\ket{11}$ & $\ket{1}$ & $\ket{11}$ & $\ket{1}$
\end{tabular}}
&
$\ket{01}$ & $\ket{00}$ & $\ket{00}$ & $\ket{00}$ & $\ket{10}$
\\
\hline

\end{tabular*}
\end{table}

\indent
Grover iterations are then performed $r$ times, where each iteration consists of applying the oracle $O_{\mu_d}$, followed by the inverse state-preparation operator $A^\dagger$, the reflection operator $S_0$, and the state-preparation operator $A$. Consequently, the amplitudes of the computational branches corresponding to the target marking are amplified, while those of the remaining branches are suppressed. Finally, the marking register is measured. If the measured marking equals the target marking $\mu_d$, the algorithm reports \emph{Reachable}  with high probability.
\begin{figure*}[t]
\centering

% First row
\begin{subfigure}[b]{0.32\textwidth}
\centering
\includegraphics[width=\linewidth]{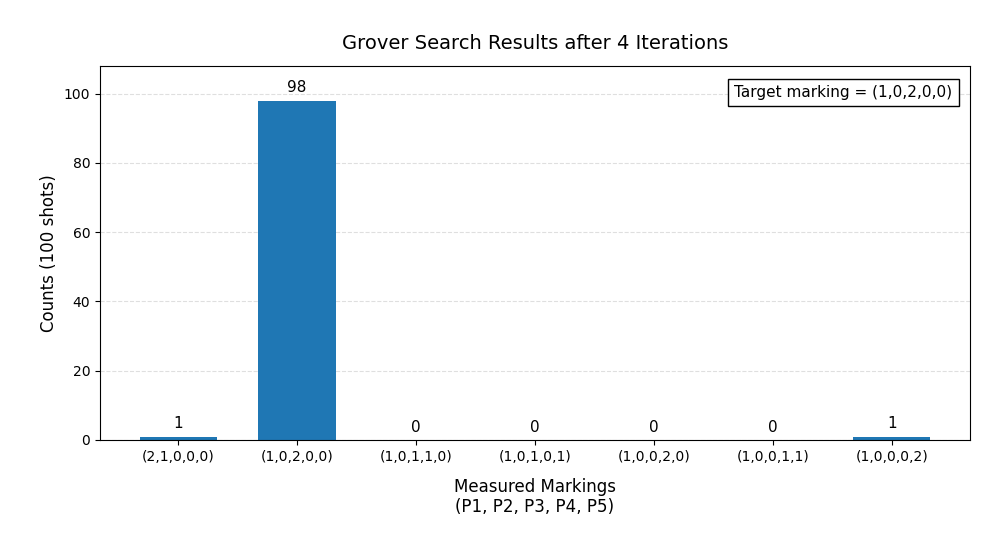}
\caption{Measurement outcomes for the reachable target marking
$(P_1,P_2,P_3,P_4,P_5)=(01,00,10,00,00)$.}
\end{subfigure}
\hfill
\begin{subfigure}[b]{0.32\textwidth}
\centering
\includegraphics[width=\linewidth]{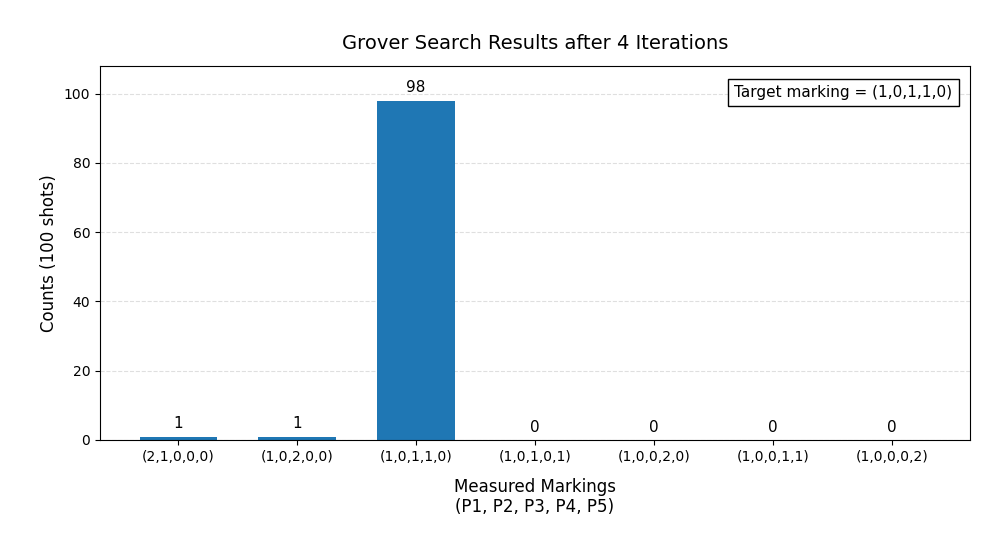}
\caption{Measurement outcomes for the reachable target marking
$(P_1,P_2,P_3,P_4,P_5)=(01,00,01,01,00)$.}
\end{subfigure}
\hfill
\begin{subfigure}[b]{0.32\textwidth}
\centering
\includegraphics[width=\linewidth]{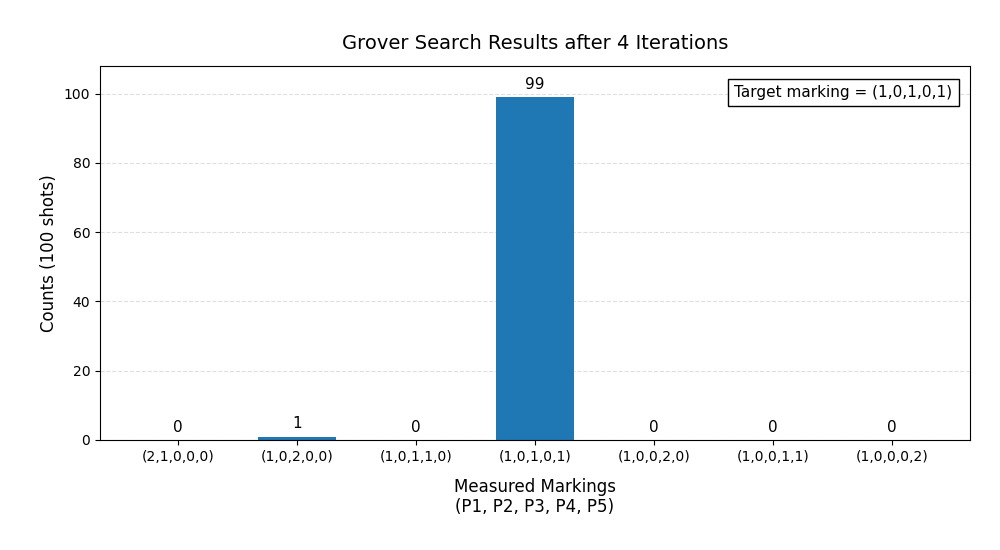}
\caption{Measurement outcomes for the reachable target marking
$(P_1,P_2,P_3,P_4,P_5)=(01,00,01,00,01)$.}
\end{subfigure}

\vspace{3mm}

% Second row
\begin{subfigure}[b]{0.32\textwidth}
\centering
\includegraphics[width=\linewidth]{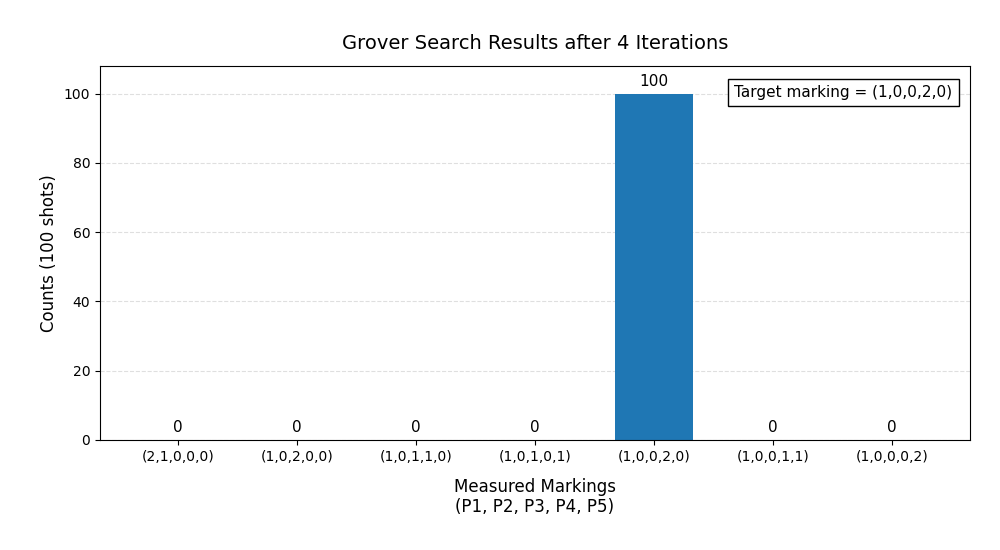}
\caption{Measurement outcomes for the reachable target marking
$(P_1,P_2,P_3,P_4,P_5)=(01,00,00,10,00)$.}
\end{subfigure}
\hfill
\begin{subfigure}[b]{0.32\textwidth}
\centering
\includegraphics[width=\linewidth]{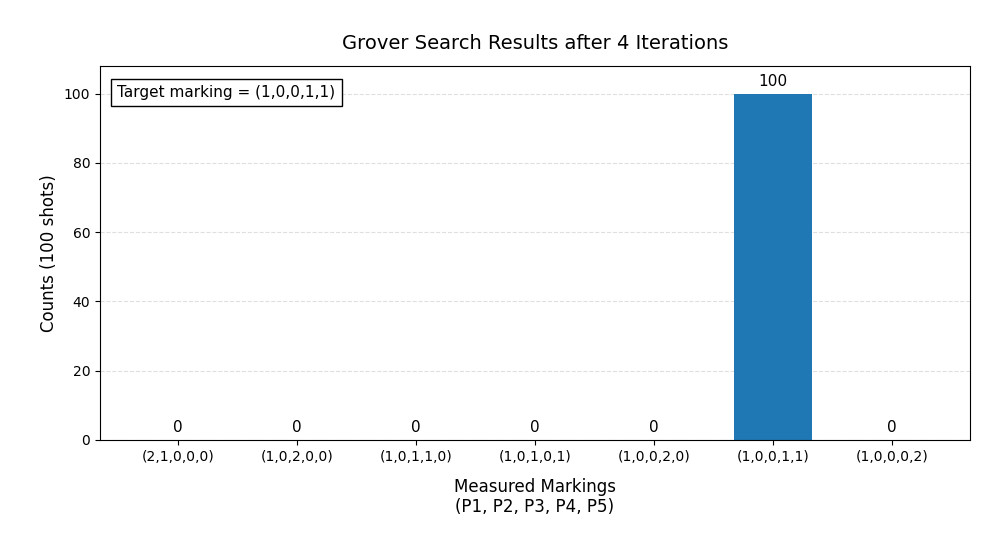}
\caption{Measurement outcomes for the reachable target marking
$(P_1,P_2,P_3,P_4,P_5)=(01,00,00,01,01)$.}
\end{subfigure}
\hfill
\begin{subfigure}[b]{0.32\textwidth}
\centering
\includegraphics[width=\linewidth]{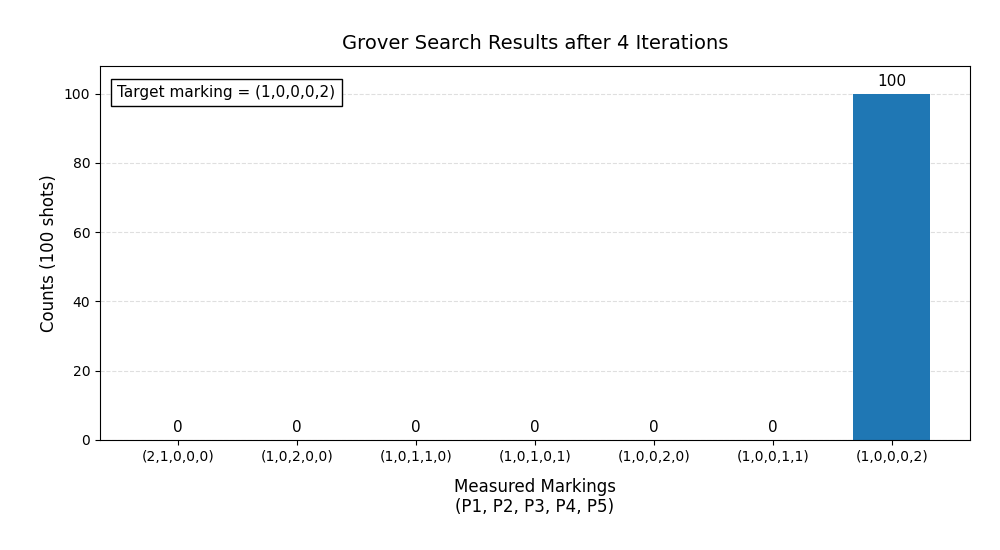}
\caption{Measurement outcomes for the reachable target marking
$(P_1,P_2,P_3,P_4,P_5)=(01,00,00,00,10)$.}
\end{subfigure}

\caption{Grover search results for reachable target markings.}
\label{fig:grover_targets}
\end{figure*}
\vspace{ 5pt}\\
\textbf{Experimental Evaluation:}
To validate the proposed reachability algorithm, a $Q\# $ implementation was developed for the QPN shown in Figure~\ref{fig:branching}, and a series of experiments was conducted. The implementation realizes both phases of the algorithm. The reachable states  in Table~\ref{tab:third_iteration}   correspond to all markings reachable from the initial marking to the end of $p=3$ iterations, in which Grover's algorithm searches for the desired target marking. Consequently, only markings contained in this superposition can be amplified during the Grover amplitude amplification phase. On the other hand, in the experiments we performed, measurement outcomes were collected over 100 shots for a subset of both reachable and non-reachable target markings. For the reachable cases, all reachable markings except the initial marking $(2,1,0,0,0)$ were included in the experiment. The resulting measurement distributions are shown in Figures~\ref{fig:grover_targets} and~\ref{fig:grover_utargets}. 

For the reachable target markings, the desired marking consistently dominates the measurement outcomes after a few Grover iterations. In particular, the target markings $(01,00,10,00,00)$, $(01,00,01,01,00)$, $(01,00,01,00,01)$, $(01,00,00,10,00)$, $(01,00,00,01,01)$, and $(01,00,00,00,10)$ were observed 98, 98, 99, 100, 100, and 99 times, respectively, out of 100 shots. In contrast, for the non-reachable target markings $(10,01,01,00,00)$, $(00,01,10,00,00)$, $(00,01,00,10,00)$, $(00,00,11,00,00)$, $(00,00,00,11,00)$, and $(00,00,00,00,11)$, no significant amplification of the specified target marking is observed. Instead, the measurement outcomes remain distributed among the reachable markings generated during Phase~I, and the specified non-reachable target does not appear as the dominant outcome. 

Although the measured counts differ slightly across the different non-reachable target cases, these variations are due to the finite number of measurement shots rather than target-specific amplitude amplification. These results are consistent with the theoretical behavior of Grover's search. Consequently, the proposed algorithm successfully distinguishes reachable from non-reachable markings while significantly increasing the probability of observing a reachable target marking. The $Q\#$ implementation used to obtain these results is provided in Appendix~\ref{app:qsharp}.

\begin{figure*}[t]
\centering

% First row
\begin{subfigure}[b]{0.32\textwidth}
\centering
\includegraphics[width=\linewidth]{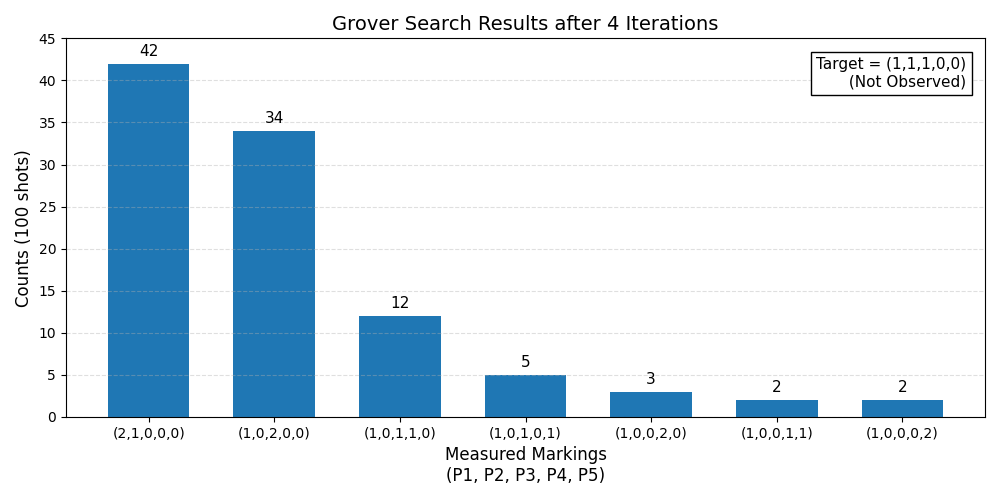}
\caption{Measurement outcomes for the non-reachable target marking
$(P_1,P_2,P_3,P_4,P_5)=(01,01,01,00,00)$.}
\end{subfigure}
\hfill
\begin{subfigure}[b]{0.32\textwidth}
\centering
\includegraphics[width=\linewidth]{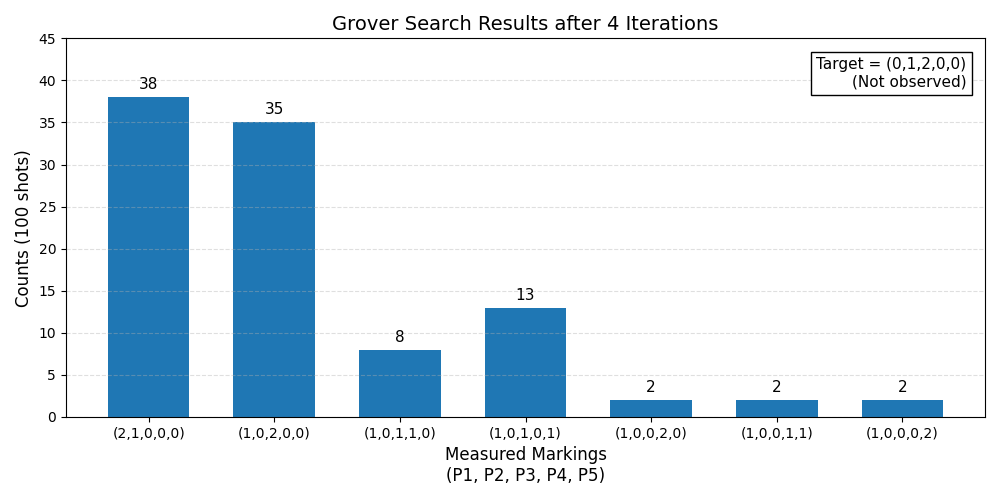}
\caption{Measurement outcomes for the non-reachable target marking
$(P_1,P_2,P_3,P_4,P_5)=(00,01,10,00,00)$.}
\end{subfigure}
\hfill
\begin{subfigure}[b]{0.32\textwidth}
\centering
\includegraphics[width=\linewidth]{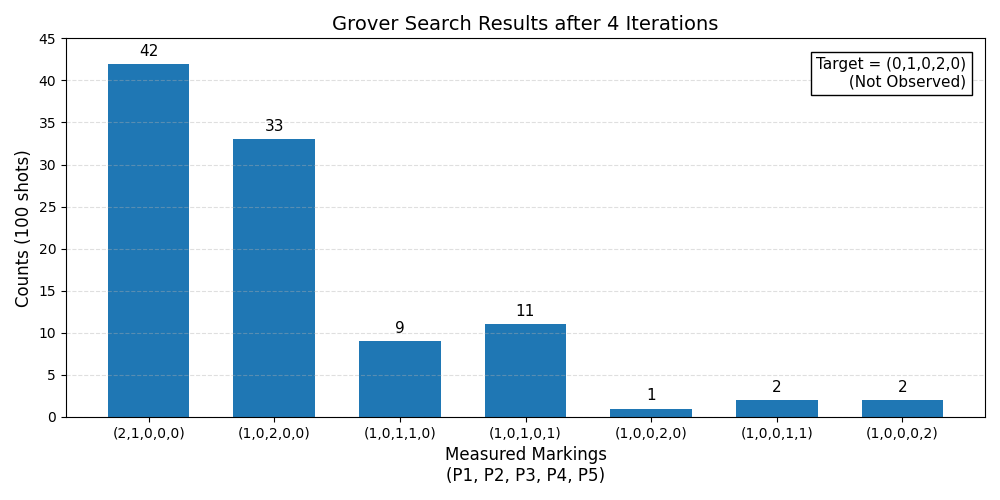}
\caption{Measurement outcomes for the non-reachable target marking
$(P_1,P_2,P_3,P_4,P_5)=(00,01,00,10,00)$.}
\end{subfigure}

\vspace{3mm}

% Second row
\begin{subfigure}[b]{0.32\textwidth}
\centering
\includegraphics[width=\linewidth]{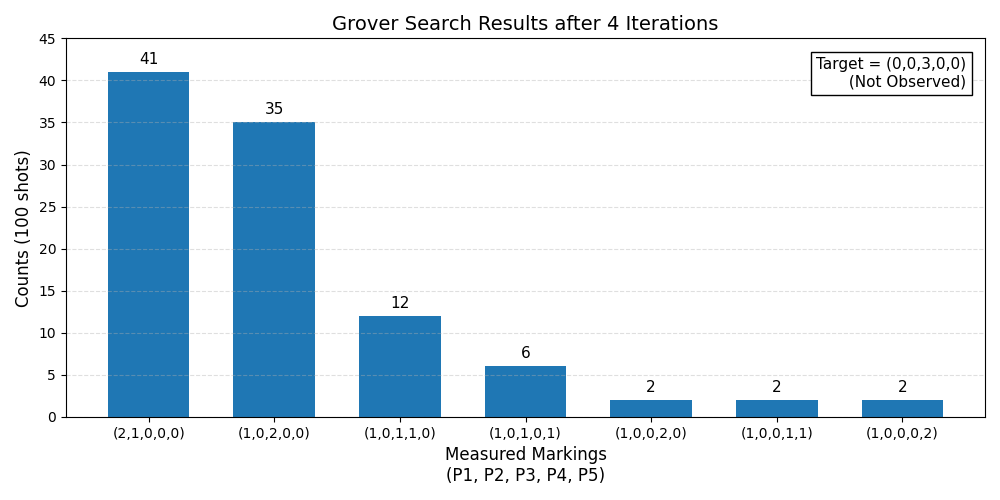}
\caption{Measurement outcomes for the non-reachable target marking
$(P_1,P_2,P_3,P_4,P_5)=(00,00,11,00,00)$.}
\end{subfigure}
\hfill
\begin{subfigure}[b]{0.32\textwidth}
\centering
\includegraphics[width=\linewidth]{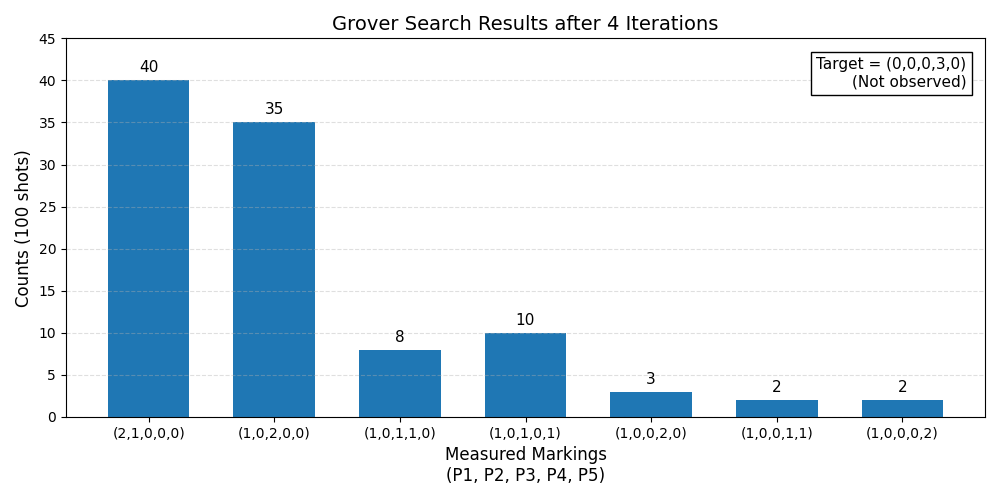}
\caption{Measurement outcomes for the non-reachable target marking
$(P_1,P_2,P_3,P_4,P_5)=(00,00,00,11,00)$.}
\end{subfigure}
\hfill
\begin{subfigure}[b]{0.32\textwidth}
\centering
\includegraphics[width=\linewidth]{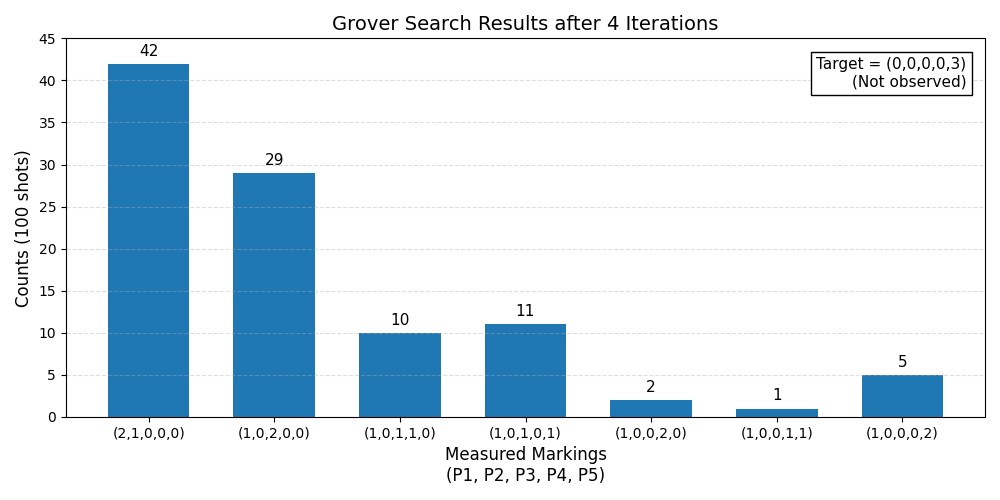}
\caption{Measurement outcomes for the non-reachable target marking
$(P_1,P_2,P_3,P_4,P_5)=(00,00,00,00,11)$.}
\end{subfigure}

\caption{Grover search results for non-reachable target markings.}
\label{fig:grover_utargets}
\end{figure*}

%\vspace{5pt}
\section{Complexity Analysis}

\vspace{-1pt}\noindent
Here we state the space and time complexities of Algorithm 1, assuming that it is executed on a quantum computer.

\vspace{5pt}
\noindent\textbf{Space Complexity:} 

Phase I: The marking register stores the  tokens in each of the
$m$ places. Since the QPN contains at most $n$ $q$-tokens, the number of $q$-tokens
at each place can be represented using
$\lceil\log_2(n+1)\rceil$ qubits. Therefore, the marking register requires $m\left\lceil\log_2(n+1)\right\rceil$
qubits, giving a space complexity of $O(m\log n).$ Since the QPN  model has $k$ transitions and we use $p$ ancillary transition registers, they require $O(p\log k)$ qubits. In addition, one firing qubit is associated with each iteration, requiring $O(p)$ qubits. Hence, the total space complexity during of Phase I is $O(m\log n+p\log k + p).$

Phase II: Grover's search operates on the marking register generated in Phase I. The marking register requires $(m\lceil \log_2(n+1)\rceil)$ qubits, giving a space requirement of $O(m\log n)$. This space has already been accounted for in the space complexity of Phase I. The oracle compares the current marking with the desired target marking. Depending on the implementation of the comparison and other quantum operations, additional ancillary qubits may be required. A straightforward implementation uses at most $O(m\log n)$ ancillary qubits. The diffusion operator acts on the same marking register and does not require asymptotically more space. Therefore, the space complexity of Phase II is $O(m\log n).$

Given that the registers used in Phase II do not asymptotically exceed those  required in Phase I, the overall space complexity of the proposed quantum reachability algorithm is $O(m\log n + p\log k + p).$

\noindent\textbf{Time Complexity:}

Phase~I consists of $p$ iterations. In each iteration, Hadamard gates are applied to the transition register, followed by the reversible transition operator $U_T$. Let $C_{U_T}$ denote the time it takes to apply $U_T$, including checking which transition is selected, determining whether it is enabled, and updating the corresponding marking. Thus, the time complexity of Phase~I is $O(p \times C_{U_T}).$ Since each place uses $\lceil\log_2(n+1)\rceil$ qubits to store its token count, checking or updating one place requires $O(\log n)$ operations. In the worst case, each of the $k$ transitions may involve all $m$ places. Therefore, one application of $U_T$ requires $O(k \times m \log n)$ operations. Hence, the worst-case time complexity of Phase~I is $O(p\times k \times m\log n).$

\indent
For Phase~II, the time complexity depends on the size $N$ of the computational search space generated during Phase~I.  The number of possible markings for $n$ q-tokens distributed among $m$ places is $\binom{n+m-1}{m-1}$. Consequently, an upper bound on the size of the search space is $N \leq \binom{n+m-1}{m-1}.$ The actual search space generated during Phase~I may be smaller than this upper bound because the marking and firing-register values are determined by the selected transitions and their enabling conditions. Therefore, not every possible combination of the marking, transition, and firing registers represents a valid execution of the QPN.

Grover amplitude amplification requires $O(\sqrt{N})$ iterations in the worst case. Each Grover iteration consists of one application of the oracle, followed by $A^\dagger$, the reflection operator $S_0$, and $A$. Therefore, the total time complexity of Phase II is
\[
O\!\left(
\sqrt{N}\left(C_O+2C_A+C_{S_0}\right)
\right),
\] 
where  (i) $C_A = O(p\times k \times m\log n)$ is the time complexity of Phase~I and $A^\dagger$ has the same asymptotic complexity as $A,$ (ii)   $ C_O = O(m\log n),$  and  (iii) $C_{S_0} = O(m\log n).$  Thus, the overall time complexity of Phase II is
\[
O\left(
\sqrt{N}
\left(
pkm\log n
\right)
\right).
\]

Including the initial state preparation performed in Phase~I, the overall time complexity of the proposed quantum algorithm is therefore
\[
O\left(
pkm\log n
+
\sqrt{N}
\left(
pkm\log n
\right)
\right) =  O(\sqrt{N}pkm\log n).
\]

\section{Concluding Remarks}\label{AA}

\noindent
This study has addressed the reachability problem in bounded Quantum Petri Nets by proposing a quantum algorithm that combines coherent state preparation with Grover's amplitude amplification. The proposed approach exploits quantum parallelism to construct a superposition of reachable markings from a given initial marking using $p$ iterations, and subsequently applies Grover's diffusion operator to amplify a desired target marking. Our theoretical analysis shows that the overall worst-case complexity of the proposed algorithm is upper bounded by $O(\sqrt{N}pkm\log n).$ Compared to the time complexity of $O(N k m)$ of the classical algorithm, where $N = \binom{n+m-1}{m-1}$ it is seen that our quantum algorithm gains  nearly a quadratic speed-up over it for large $n.$
A $Q\#$ implementation was developed to validate the proposed algorithm. The experimental results demonstrate that the state-preparation phase correctly generates a quantum superposition of reachable markings together with their corresponding transition-selection and firing histories, while the amplitude-amplification phase successfully amplifies the probability of measuring the desired target marking. These results confirm the correctness and feasibility of the proposed quantum algorithm for solving the bounded reachability problem in Quantum Petri Nets. It remains to be established if quantum computing models like the QPN may prove  beneficial in improving the time and space complexities of other computationally demanding problems.

\appendix

\section{Q\# Implementation of the Reachability Algorithm}
\label{app:qsharp}

This appendix presents the Q\# implementation used to validate the proposed reachability algorithm. The code is divided into three parts. Listing~\ref{lst:phase1} presents Phase I, where a superposition of reachable markings is generated. Listing~\ref{lst:phase2} presents Phase II, where amplitude amplification is applied to increase the probability of measuring the target place. Listing~\ref{lst:main} presents the execution and measurement procedure.

%--------------------------------------------------
% Phase II
%--------------------------------------------------
%--------------------------------------------------
% Phase I
%--------------------------------------------------
\begin{lstlisting}[
basicstyle=\scriptsize\ttfamily,
breaklines=true,
columns=fullflexible,
keepspaces=true,
frame=single,
caption={Phase I: Reachable-state preparation},
label={lst:phase1}
]
operation EncodeCount(place : Qubit[], count : Int) : Unit is Adj + Ctl {
    if ((count &&& 2) != 0) { X(place[0]); }
    if ((count &&& 1) != 0) { X(place[1]); }
}

operation ControlledIncrementWithTwoControls(c1 : Qubit, c2 : Qubit, place : Qubit[]) : Unit is Adj + Ctl {
    Controlled X([c1, c2, place[1]], place[0]);
    Controlled X([c1, c2], place[1]);
}

operation ControlledDecrementWithTwoControls(c1 : Qubit, c2 : Qubit, place : Qubit[]) : Unit is Adj + Ctl {
    Controlled X([c1, c2], place[1]);
    Controlled X([c1, c2, place[1]], place[0]);
}

operation ComputeNonZero(place : Qubit[], flag : Qubit) : Unit is Adj + Ctl {
    CNOT(place[0], flag);
    CNOT(place[1], flag);
    CCNOT(place[0], place[1], flag);
}

operation ComputeAlpha01(alpha : Qubit[], flag : Qubit) : Unit is Adj + Ctl {
    within { X(alpha[0]); }
    apply { CCNOT(alpha[0], alpha[1], flag); }
}

operation ComputeAlpha10(alpha : Qubit[], flag : Qubit) : Unit is Adj + Ctl {
    within { X(alpha[1]); }
    apply { CCNOT(alpha[0], alpha[1], flag); }
}

operation ComputeAlpha11(alpha : Qubit[], flag : Qubit) : Unit is Adj + Ctl {
    CCNOT(alpha[0], alpha[1], flag);
}

operation UT(marking : Qubit[], alpha : Qubit[], fired : Qubit) : Unit is Adj + Ctl {
    let p1 = marking[0..1];
    let p2 = marking[2..3];
    let p3 = marking[4..5];
    let p4 = marking[6..7];
    let p5 = marking[8..9];

    use nz = Qubit[3];
    use isT1 = Qubit();
    use isT2 = Qubit();
    use isT3 = Qubit();

    ComputeNonZero(p1, nz[0]);
    ComputeNonZero(p2, nz[1]);
    ComputeNonZero(p3, nz[2]);

    ComputeAlpha01(alpha, isT1);
    ComputeAlpha10(alpha, isT2);
    ComputeAlpha11(alpha, isT3);

    Controlled X([isT1, nz[0], nz[1]], fired);
    Controlled X([isT2, nz[2]], fired);
    Controlled X([isT3, nz[2]], fired);

    ComputeAlpha11(alpha, isT3);
    ComputeAlpha10(alpha, isT2);
    ComputeAlpha01(alpha, isT1);

    ComputeNonZero(p3, nz[2]);
    ComputeNonZero(p2, nz[1]);
    ComputeNonZero(p1, nz[0]);

    ComputeAlpha01(alpha, isT1);
    ComputeAlpha10(alpha, isT2);
    ComputeAlpha11(alpha, isT3);

    ControlledDecrementWithTwoControls(fired, isT1, p1);
    ControlledDecrementWithTwoControls(fired, isT1, p2);
    ControlledIncrementWithTwoControls(fired, isT1, p3);
    ControlledIncrementWithTwoControls(fired, isT1, p3);

    ControlledDecrementWithTwoControls(fired, isT2, p3);
    ControlledIncrementWithTwoControls(fired, isT2, p4);

    ControlledDecrementWithTwoControls(fired, isT3, p3);
    ControlledIncrementWithTwoControls(fired, isT3, p5);

    ComputeAlpha11(alpha, isT3);
    ComputeAlpha10(alpha, isT2);
    ComputeAlpha01(alpha, isT1);
}

operation PhaseI(
    marking : Qubit[],
    alpha1 : Qubit[],
    alpha2 : Qubit[],
    alpha3 : Qubit[],
    fired1 : Qubit,
    fired2 : Qubit,
    fired3 : Qubit
) : Unit is Adj + Ctl {
    let p1 = marking[0..1];
    let p2 = marking[2..3];
    let p3 = marking[4..5];
    let p4 = marking[6..7];
    let p5 = marking[8..9];

    EncodeCount(p1, 2);
    EncodeCount(p2, 1);
    EncodeCount(p3, 0);
    EncodeCount(p4, 0);
    EncodeCount(p5, 0);

    H(alpha1[0]);
    H(alpha1[1]);
    UT(marking, alpha1, fired1);

    H(alpha2[0]);
    H(alpha2[1]);
    UT(marking, alpha2, fired2);

    H(alpha3[0]);
    H(alpha3[1]);
    UT(marking, alpha3, fired3);
}
\end{lstlisting}
%--------------------------------------------------
% Phase II
%--------------------------------------------------
%--------------------------------------------------
% Phase II
%--------------------------------------------------
\begin{lstlisting}[
basicstyle=\scriptsize\ttfamily,
breaklines=true,
columns=fullflexible,
keepspaces=true,
frame=single,
caption={Phase II: Grover amplitude amplification},
label={lst:phase2}
]
operation OracleTargetMarking(marking : Qubit[], target : Int) : Unit is Adj + Ctl {
    let weights = [512, 256, 128, 64, 32, 16, 8, 4, 2, 1];

    within {
        for i in 0..9 {
            if ((target &&& weights[i]) == 0) {
                X(marking[i]);
            }
        }
    } apply {
        H(marking[9]);
        Controlled X(marking[0..8], marking[9]);
        H(marking[9]);
    }
}

operation ReflectionAboutZero(register : Qubit[]) : Unit is Adj + Ctl {
    let n = Length(register);

    within {
        for q in register {
            X(q);
        }
    } apply {
        H(register[n - 1]);
        Controlled X(register[0..n - 2], register[n - 1]);
        H(register[n - 1]);
    }
}

operation GroverIteration(
    marking : Qubit[],
    alpha1 : Qubit[],
    alpha2 : Qubit[],
    alpha3 : Qubit[],
    fired1 : Qubit,
    fired2 : Qubit,
    fired3 : Qubit,
    target : Int
) : Unit {
    let allRegs =
        marking
        + alpha1 + [fired1]
        + alpha2 + [fired2]
        + alpha3 + [fired3];

    OracleTargetMarking(marking, target);

    Adjoint PhaseI(
        marking,
        alpha1,
        alpha2,
        alpha3,
        fired1,
        fired2,
        fired3
    );

    ReflectionAboutZero(allRegs);

    PhaseI(
        marking,
        alpha1,
        alpha2,
        alpha3,
        fired1,
        fired2,
        fired3
    );
}
\end{lstlisting}
%--------------------------------------------------
% Execution and Measurement
%--------------------------------------------------
%--------------------------------------------------
% Main Program
%--------------------------------------------------
\begin{lstlisting}[
basicstyle=\scriptsize\ttfamily,
breaklines=true,
columns=fullflexible,
keepspaces=true,
frame=single,
caption={Main program: 100-shot execution},
label={lst:main}
]
operation MeasureMarkingAsInt(marking : Qubit[]) : Int {
    mutable value = 0;
    let weights = [512, 256, 128, 64, 32, 16, 8, 4, 2, 1];

    for i in 0..9 {
        if (M(marking[i]) == One) {
            set value += weights[i];
        }
    }

    return value;
}

@EntryPoint()
operation Main() : Unit {
    let target = 288;
    let groverIterations = 1;

    mutable count1001000000 = 0;
    mutable count0100100000 = 0;
    mutable count0100010100 = 0;
    mutable count0100010001 = 0;
    mutable count0100001000 = 0;
    mutable count0100000101 = 0;
    mutable count0100000010 = 0;
    mutable other = 0;

    for shot in 1..100 {
        use marking = Qubit[10];

        use alpha1 = Qubit[2];
        use alpha2 = Qubit[2];
        use alpha3 = Qubit[2];

        use fired1 = Qubit();
        use fired2 = Qubit();
        use fired3 = Qubit();

        let allRegs =
            marking
            + alpha1 + [fired1]
            + alpha2 + [fired2]
            + alpha3 + [fired3];

        PhaseI(marking, alpha1, alpha2, alpha3, fired1, fired2, fired3);

        for _ in 1..groverIterations {
            GroverIteration(
                marking,
                alpha1,
                alpha2,
                alpha3,
                fired1,
                fired2,
                fired3,
                target
            );
        }

        let measured = MeasureMarkingAsInt(marking);

        if (measured == 576) {
            set count1001000000 += 1;
        } elif (measured == 288) {
            set count0100100000 += 1;
        } elif (measured == 276) {
            set count0100010100 += 1;
        } elif (measured == 273) {
            set count0100010001 += 1;
        } elif (measured == 264) {
            set count0100001000 += 1;
        } elif (measured == 261) {
            set count0100000101 += 1;
        } elif (measured == 258) {
            set count0100000010 += 1;
        } else {
            set other += 1;
        }

        ResetAll(allRegs);
    }

    Message($"Grover iterations: {groverIterations}");
    Message($"Target decimal value: {target}");
    Message("Results after 100 shots:");
    Message($"1001000000 = (2,1,0,0,0), decimal 576: {count1001000000}");
    Message($"0100100000 = (1,0,2,0,0), decimal 288: {count0100100000}");
    Message($"0100010100 = (1,0,1,1,0), decimal 276: {count0100010100}");
    Message($"0100010001 = (1,0,1,0,1), decimal 273: {count0100010001}");
    Message($"0100001000 = (1,0,0,2,0), decimal 264: {count0100001000}");
    Message($"0100000101 = (1,0,0,1,1), decimal 261: {count0100000101}");
    Message($"0100000010 = (1,0,0,0,2), decimal 258: {count0100000010}");
    Message($"Other: {other}");
}
\end{lstlisting}

\end{document}